\documentclass[sigconf,nonacm]{acmart}

\AtBeginDocument{%
  }

\usepackage{amsmath,amsthm}

\usepackage{enumitem}
\usepackage[table]{xcolor}
\usepackage{graphicx}
\usepackage{subcaption}
\usepackage{booktabs}
\usepackage{multirow}
\usepackage{makecell}
\usepackage{tabularx}
\usepackage{array}
\usepackage{xspace}
\usepackage{float}
\usepackage{balance}
\usepackage{hyperref}
\usepackage[normalem]{ulem}
\usepackage{soul}
\usepackage{pifont}
\usepackage{algorithm}
\usepackage{algpseudocode}
\usepackage[sets]{cryptocode}
\usepackage{listings}
\usepackage{upquote}
\usepackage[most]{tcolorbox}
\tcbuselibrary{skins,breakable}
\usepackage{tikz}
\usetikzlibrary{arrows.meta,positioning,shapes.geometric,calc}

\sethlcolor{gray!20}
\theoremstyle{definition}
\newtheorem{definition}{Definition}
\theoremstyle{plain}
\newtheorem{theorem}{Theorem}
\newtheorem{lemma}{Lemma}
\newtheorem{proposition}{Proposition}

\newcommand{\system}{\textsc{Uber-Join}\xspace}
\newcommand{\joinhard}{\textsc{Join-Hard}\xspace}

\newcommand{\stitle}[1]{\noindent\textbf{#1}\xspace}

\newcommand{\squishlist}{%
  \begin{list}{$\bullet$}{\leftmargin=1em}%
}
\newcommand{\squishend}{\end{list}}

\newcommand*\circled[1]{\tikz[baseline=(char.base)]{%
  \node[shape=circle,draw,inner sep=0.5pt] (char) {#1};}}

\definecolor{rankfirst}{HTML}{C8DEC3}
\definecolor{ranksecond}{HTML}{FBEDC3}
\definecolor{rankthird}{HTML}{F2C6B6}
\newcommand{\rkfirst}[1]{{\setlength{\fboxsep}{1pt}\colorbox{rankfirst}{#1}}}
\newcommand{\rksecond}[1]{{\setlength{\fboxsep}{1pt}\colorbox{ranksecond}{#1}}}
\newcommand{\rkthird}[1]{{\setlength{\fboxsep}{1pt}\colorbox{rankthird}{#1}}}
\newcommand{\sysrowstrut}{\rule[-\dimexpr\dp\strutbox+\aboverulesep\relax]{0pt}{\dimexpr\ht\strutbox+\dp\strutbox+\aboverulesep+\belowrulesep\relax}}

\newtheoremstyle{exampleplain}{.5\baselineskip}{.5\baselineskip}{\normalfont}{\parindent}{\scshape}{.}{.5em}{\thmname{#1}\thmnumber{ #2}}
\theoremstyle{exampleplain}

\definecolor{exAccent}{HTML}{5B8CC8}
\definecolor{exBack}{HTML}{EBF2F9}
\newtcolorbox{exampleaccent}{enhanced jigsaw, breakable, parbox=false, frame hidden, boxrule=0pt, arc=0pt,
  colback=exBack, borderline west={2.2pt}{0pt}{exAccent}, left=8pt, right=8pt, top=7pt, bottom=7pt,
  boxsep=0pt, before skip=6pt, after skip=6pt}
\newcommand{\example}[1]{%
  \begin{exampleaccent}\noindent\refstepcounter{exampleenv}\textbf{Example \theexampleenv.}\hspace{.4em}#1\end{exampleaccent}%
}

\begin{document}
\title{\system: Unifying Diverse Table Joins using  LLMs with Precision and Cost Guarantees}

\author{Guohui Guan}
\affiliation{%
  \institution{University of Minnesota}
  \country{}}
\email{gguan@umn.edu}

\author{Chang Ge}
\affiliation{%
  \institution{University of Minnesota}
  \country{}}
\email{cge@umn.edu}

\author{Yeye He}
\affiliation{%
  \institution{Microsoft Research}
  \country{}}
\email{yeyehe@microsoft.com}

\begin{abstract}

When two tables from different sources are joined, the values that should join often differ in spelling, format, or meaning, and fuzzy-, transformation-, and semantic-joins each handle one type of these differences.
Large language models (LLMs) can match values that differ in several ways at once, but every LLM call is billed, and having an LLM compare the rows of two large tables can cost far more than a typical budget allows.
This paper studies how to combine these join types with LLMs to join two tables under a dollar budget, maximizing recall while keeping the precision of the result at or above a target, a problem that is \textsf{NP}-hard.

Our key idea is to use LLMs selectively to create join programs and verify their outputs through sampling, while adaptively deciding which joins to pursue under budget and precision constraints.
This combines the scalability of program-based joins with selective LLM verification, while maintaining a statistical precision guarantee on the accumulated result.
We develop \system{}, which treats creating a join and checking the output of a created join as two levels of arms in a budgeted multi-armed bandit.
It commits an output only if the precision of the whole result, estimated from an LLM's judgments of sampled pairs, stays at or above the target with high confidence.
Under the same budget, \system{} achieves the highest recall at the precision target in nearly all settings on three join benchmarks and \joinhard{}, a new benchmark of large tables.

\end{abstract}

\maketitle

\section{Introduction}
\label{sec:intro}

\begin{figure*}[t]
\centering
\footnotesize
\def\fz#1{\textbf{\textcolor{red!70!black}{#1}}}
\def\nm#1{\uline{#1}}
\def\ct#1{\dashuline{#1}}
\def\ab#1{\textbf{\textcolor{blue!60!black}{#1}}}
\begin{tikzpicture}[
  hospitalcell/.style={anchor=north west, align=left,
               text width=\dimexpr#1\textwidth-4pt\relax,
               inner xsep=2pt, inner ysep=3pt, outer sep=0pt},
  note/.style={anchor=west, align=left, text width=0.25\textwidth,
               inner sep=0pt, font=\footnotesize},
  rule/.style={draw=black!45, line width=0.35pt}]
\node[anchor=south, font=\footnotesize\bfseries] at (0.18\textwidth,0.04)
  {Table $S$};
\node[anchor=south, font=\footnotesize\bfseries] at (0.535\textwidth,0.04)
  {Table $T$};
\coordinate (r0) at (0,-17pt);
\foreach \left/\divider/\right/\id in {0/0.035/0.36/S,0.39/0.425/0.68/T} {
  \fill[black!6] (\left\textwidth,0) rectangle (r0 -| \right\textwidth,0);
  \node[font=\scriptsize\bfseries, inner sep=0pt]
    at ({(\left+\divider)/2*\textwidth},-8.5pt) {\id-ID};
  \node[font=\scriptsize\bfseries, inner sep=0pt]
    at ({(\divider+\right)/2*\textwidth},-8.5pt) {Hospital Name};
}
\foreach \i/\sname/\tname [evaluate=\i as \previous using int(\i-1)] in {
  1/{\fz{The} Hospital for Sick Children}/{Hospital for Sick Children},
  2/{Mease Dunedin Hosp\fz{ti}al}/{Mease Dunedin Hosp\fz{it}al},
  3/{Korle\fz{-}Bu Teaching Hospital}/{Korle Bu Teaching Hospital},
  4/{102 Pok Fu Lam Road, \ct{Hong Kong}\\/ \nm{Queen Mary Hospital}}/{\nm{Queen Mary Hospital} (\ct{Hong Kong})},
  5/{name: \nm{Prince of Wales Hospital}; street: Barker Street; city: \ct{Sydney}; country: Australia}/{\nm{Prince of Wales Hospital}, \ct{Sydney}},
  6/{55 Fruit Street, \ct{Boston}\\/ \nm{Massachusetts General Hospital}}/{\nm{Massachusetts General Hospital} (\ct{Boston})},
  7/{\ab{CHOP}}/{\ab{C}hildren's \ab{H}ospital \ab{o}f \ab{P}hiladelphia},
  8/{\ab{H}ospital for \ab{S}pecial \ab{S}urgery}/{\ab{HSS}},
  9/{Universit\"atsklinikum Heidelberg}/{Heidelberg University Hospital},
  10/{30 Gascoigne Road, \ct{Kowloon}, Hong Kong SAR, China\\/ \nm{Queen Elizabeth Hosp\fz{ti}al}}/{\nm{Queen Elizabeth Hosp\fz{it}al} (\ct{Kowloon})},
  11/{1 Gustave L. Levy Place, \ct{New York}, NY, USA\\/ \nm{\fz{Mt.} Sinai Hospital}}/{\nm{\fz{Mount} Sinai Hospital} (\ct{New York})}} {
  \node[hospitalcell=0.325] (s\i) at (r\previous -| 0.035\textwidth,0) {\sname};
  \node[hospitalcell=0.255] (t\i) at (r\previous -| 0.425\textwidth,0) {\tname};
  \path let \p1=(s\i.south west), \p2=(t\i.south west) in
    coordinate (r\i) at (0,{min(\y1,\y2)});
  \coordinate (c\i) at ($(r\previous)!0.5!(r\i)$);
  \node[inner sep=0pt] at (c\i -| 0.0175\textwidth,0) {$s_{\i}$};
  \node[inner sep=0pt] at (c\i -| 0.4075\textwidth,0) {$t_{\i}$};
  \draw[green!50!black, line width=0.45pt, {Stealth[length=2.4pt]}-{Stealth[length=2.4pt]}]
    (c\i -| 0.364\textwidth,0) -- (c\i -| 0.386\textwidth,0);
}
\foreach \left/\divider/\right in {0/0.035/0.36,0.39/0.425/0.68} {
  \draw[line width=0.65pt] (\left\textwidth,0)
    rectangle (r11 -| \right\textwidth,0);
  \draw[rule] (\divider\textwidth,0) -- (r11 -| \divider\textwidth,0);
  \foreach \i in {0,...,10} {
    \draw[rule] (r\i -| \left\textwidth,0) -- (r\i -| \right\textwidth,0);
  }
  \foreach \i in {3,6,9} {
    \draw[black!75, line width=0.65pt]
      (r\i -| \left\textwidth,0) -- (r\i -| \right\textwidth,0);
  }
}
\foreach \upper/\lower/\name in {r0/r3/f,r3/r6/x,r6/r9/s,r9/r11/c} {
  \coordinate (n\name) at ($(\upper)!0.5!(\lower)$);
  \draw[black!50, line width=0.45pt]
    ($(\upper -| 0.828\textwidth,0)+(0,-3pt)$) --
    ($(\upper -| 0.84\textwidth,0)+(0,-3pt)$) --
    ($(\lower -| 0.84\textwidth,0)+(0,3pt)$) --
    ($(\lower -| 0.828\textwidth,0)+(0,3pt)$);
  \draw[black!50, line width=0.45pt]
    (n\name -| 0.84\textwidth,0) -- (n\name -| 0.85\textwidth,0);
}
\foreach \i/\v in {1/extra word, 2/typo, 3/punctuation, 4/{reorder, drop street}, 5/{key-value to list, drop fields}, 6/{reorder, drop street}, 7/{abbreviation in $S$}, 8/{abbreviation in $T$}, 9/translation, 10/{reorder, drop address + typo}, 11/{reorder, drop address + short form}} {
  \node[anchor=west, font=\small\itshape, text=black!70, inner sep=0pt, text width=0.13\textwidth, align=left, execute at begin node={\hyphenpenalty=10000\relax}] at (c\i -| 0.69\textwidth,0) {\v};
}
\foreach \n/\g in {f/{(J1)\\fuzzy-join}, x/{(J2)\\transformation-join}, s/{(J3)\\semantic-join}, c/{(J2 + J1)\\composition}} {
  \node[anchor=west, font=\footnotesize\bfseries, inner sep=0pt, align=left] at (n\n -| 0.855\textwidth,0) {\g};
}
\end{tikzpicture}
\caption{Sample rows of two hospital tables (Example~\ref{ex:hospital}); each $(s_i,t_i)$ is a matching pair. \protect\fz{Red} marks small edits, \protect\nm{solid} and \protect\ct{dashed} underlines mark rearranged fields, and \protect\ab{blue} marks abbreviation letters.}
\label{fig:running}
\end{figure*}

As organizations accumulate data across systems, answering a question often involves combining information from multiple tables.
For example, an analyst may combine a sales table with a product catalog maintained in another system to report the revenue of each product category.
In relational databases, this combination is performed through joins.
A join identifies pairs of rows whose join keys satisfy a specified predicate.
One of the most common cases is an equi-join, where the predicate requires the corresponding key values to be exactly equal.

Exact equality, however, can be too restrictive when tables come from different sources.
Values that should correspond may differ in spelling, formatting, representation, or terminology, causing correct row pairs to remain unmatched~\cite{autojoin, semajoin}.
Therefore, techniques have been proposed to automatically handle diverse types of joins beyond exact equality, such as \underline{\textbf{(J1)} fuzzy-join} that handles small string-level differences (e.g., Auto-FuzzyJoin~\cite{autofuzzyjoin}), \underline{\textbf{(J2)} transformation-join} that learns structural transformations before equi-joins can be performed (e.g., Auto-Join~\cite{autojoin}, CST~\cite{cst}, and GXJoin~\cite{omidvartehrani2024gxjoin}), and \underline{\textbf{(J3)} semantic-join} that handles semantic mapping (e.g., Sema-Join~\cite{semajoin} and mapping lookup~\cite{dataxformer}). Large language models (LLMs) go beyond these joins by using their language understanding and world knowledge to enable \underline{\textbf{(J4)} LLM-direct-join},
which supports diverse types of joins by directly judging whether rows match~\cite{comem}. We use an example to explain these methods:

\example{\label{ex:hospital}
Figure~\ref{fig:running} shows sample rows of two hospital tables from a join benchmark~\cite{autofuzzyjoin}, table $S$ with 257 rows and table $T$ with 2{,}424 rows.
The goal is to join the rows that name the same hospital, using the hospital names as join keys.

Let $s_i$ and $t_j$ denote rows of $S$ and $T$, respectively.
For the rows shown, each $(s_i,t_i)$ is a matching row pair that can join.
\underline{\textbf{(J1)} fuzzy-join}: row pairs with $i=1,2,3$ have small textual differences that can be handled by fuzzy-join-like methods.
\underline{\textbf{(J2)} transformation-join}: those with $i=4,5,6$ contain the same hospital names that are structured very differently, which requires transformation-joins that transform join keys from both sides into formats that are ready for equi-joins (fuzzy-joins alone are often unable to handle large string differences in these examples). \underline{\textbf{(J3)} semantic-join}: those with $i=7,8,9$ involve abbreviations or semantically matching values that fuzzy-join or transformation-join often cannot handle.
These joins can be \textit{composed}: the row pairs with $i=10,11$ need a transformation-join (J2) step followed by a fuzzy-join (J1) step, as neither alone is sufficient to join these row pairs.
Finally,
\underline{\textbf{(J4)} LLM-direct-join}: an LLM can directly judge whether rows in any of the four groups refer to the same hospital.
Note that simple equi-join would miss all these row pairs.}

\begin{figure}[t]
\centering
\includegraphics[width=\columnwidth]{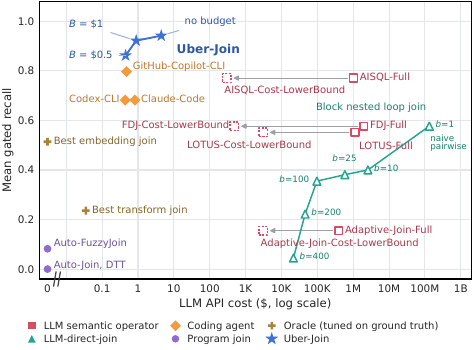}
\caption{
Mean gated recall (i.e., recall if the precision meets the target 0.9, and 0 otherwise) and LLM API cost on \joinhard{}, whose source tables have 250K rows (22 MB) on average.
The baselines run without a budget, and \system{} (ours) with the budgets shown. Hollow markers are estimated costs, and Cost-LowerBound markers remove or minimize the LLM calls on row pairs of the semantic operators~\cite{lotus, fdj, aisql, adaptive-join}. The block nested loop join reads a block of $b$ rows of each table per LLM call, for $b = 1$ to $400$. Claude-Code uses \texttt{claude-sonnet-5}, and other LLM-based methods use \texttt{gpt-5.4}.}
\label{fig:intro_cost}
\end{figure}

As Example~\ref{ex:hospital} shows, J1 to J3 each match the row pairs of one join type, whereas an LLM-direct-join (J4) can match the row pairs of all join types.
LLM-direct-joins also yield high precision (i.e., the fraction of returned row pairs that are correct), which is often desired because incorrect matches propagate errors into downstream analysis~\cite{autofuzzyjoin,semajoin}.
However, an LLM-direct-join pays for every LLM call to judge the row pairs across the two tables, whose number is quadratic in the table size.
Figure~\ref{fig:intro_cost} shows this cost for a modest-sized source table of 250{,}000 rows: even when batching row pairs per LLM call (block nested loop join), an LLM-direct-join costs tens of thousands of dollars or more per dataset.

On the other hand, a fuzzy-, transformation-, or semantic-join (J1 to J3) calls an LLM, if at all, only a few times on sample rows, for example to write a key-rewriting function, regardless of the table sizes. It then joins all rows without LLM calls, and methods such as Auto-FuzzyJoin~\cite{autofuzzyjoin} and Auto-Join~\cite{autojoin} call no LLM at all. However, each of these join types covers only the matches of its own type, and its precision on the given tables can be much lower than that of an LLM-direct-join and is unknown in advance.

These complementary strengths motivate combining inexpensive program-based joins (J1 to J3) with selective LLM-direct-joins (J4) under a shared budget. The goal is to recover as many correct row pairs as possible subject to two constraints: the precision of the join result is at least a desired precision, called the \textit{precision target}, and the total LLM cost is within a \textit{budget}. We refer to this problem as \textit{Precision-Guaranteed Budget-Constrained Join}.

\stitle{Technical challenges.} Precision-Guaranteed Budget-Constrained Join involves two types of uncertainty, and resolving either of them consumes part of the budget.
First, it is unknown which join types suit the given tables. Example~\ref{ex:hospital} shows that several join types and their compositions can each cover a different subset of the matches of one pair of tables, while other pairs of tables can be joined by a fuzzy-join or a transformation-join alone. The effectiveness of a join type on the given tables can only be assessed after it is applied to them. Since applying a join type with an LLM costs budget, a solution cannot try every type extensively and has to explore which types to invest in.

Second, even a suitable join type can produce outputs with widely varying quality. For example, for tables suitable for a transformation-join, a key-rewriting function that an LLM writes from sample rows may extract the wrong field and return incorrect matches. The quality of an output can only be estimated by having an LLM evaluate its pairs. Without such estimates, it is difficult to determine whether an output would satisfy the precision constraint or whether applying the same join type again would be beneficial.

\stitle{Our approach: \system{}.} We present \system{}, a framework that joins two tables with diverse join types within a budget and a precision target. Our key idea is to use LLMs selectively to create join programs and verify their outputs through sampling, while adaptively deciding which joins to pursue under budget and precision constraints. This combines the scalability of program-based joins with the precision of LLMs, while the precision of the final result is guaranteed with high probability.

To do so, \system{} shares the mechanism of a budgeted multi-armed bandit (MAB)~\cite{tranthanh2012budget}, which balances pulls that \textit{explore} arms with unknown rewards against pulls that \textit{exploit} arms known to pay off~\cite{bubeck2012regret}. A planner pulls two levels of arms, one for each challenge: a \textsf{CREATE} arm creates a new join of a join type, and an \textsf{ESTIMATE-EXECUTE} arm executes a created join and estimates the quality of its output with an LLM. Because these arms appear as joins are created and their rewards shrink as other joins cover the same rows, the planner pulls the arm with the most expected new correct pairs per dollar and updates these expectations after each pull. A precision guard accepts an output only if a lower confidence bound on the precision of the join result, computed from the LLM's judgments of sampled pairs, stays at least the precision target. As shown in Figure~\ref{fig:intro_cost}, \system{} achieves a higher gated recall than all baselines, at a far lower cost than the LLM-direct-joins and LLM semantic operators.
Our work makes the following contributions.

\squishlist
\item We formulate \textit{Precision-Guaranteed Budget-Constrained Join}, the problem of creating and executing joins of different join types under uncertainty to maximize recall under budget and precision constraints.
We show it is \textsf{NP}-hard (\S\ref{sec:problem}).
\item We develop \system{}, a greedy framework that uses two levels of arms for creating and executing joins, and establish a statistical guarantee on the precision of the join result (\S\ref{sec:sol}).
\item We introduce \joinhard{}, a benchmark with 250{,}000-row source tables, and show on four benchmarks that \system{} achieves high recall under budget and precision constraints (\S\ref{sec:evaluation}).
\squishend

The implementation of \system{} and our \joinhard{} benchmark have been open-sourced to facilitate future research.\footnote{\url{https://github.com/GuohuiGuan/uber-join}}

\section{Preliminaries}
\label{sec:prelim}

This section defines join types and instances (\S\ref{sec:prelim:types}), their LLM cost (\S\ref{sec:prelim:cost}), and the LLM-oracle that estimates join quality (\S\ref{sec:prelim:oracle}), on which \S\ref{sec:problem} and \S\ref{sec:sol} build.

\subsection{Join Types and Join Program Instances}
\label{sec:prelim:types}

Let $S$ and $T$ be tables with rows $s \in S$ and $t \in T$.
A \emph{join predicate} $\theta(s,t)$ is a Boolean condition that holds if $s$ and $t$ match.
An inner join produces the \emph{join result}
$J = S \bowtie_\theta T = \{(s,t) \in S \times T : \theta(s,t)\}$.

The most common join is the \emph{equi-join}, whose predicate $\theta(s,t) \equiv (s.A = t.B)$ compares the \emph{join keys} $A$ and $B$, which consist of the same number of attributes (one or more).
It misses pairs such as $(s_1,t_1)$ in Figure~\ref{fig:running}, whose keys differ by the word ``The''.

The quality of a join result $J$ can be measured against the set $J^{*} \subseteq S \times T$ of \emph{correct pairs}, such as the pairs of rows that name the same hospital, by its precision and recall~\cite{autofuzzyjoin, autojoin, semajoin}, $\text{precision}(J) = |J \cap J^{*}| / |J|$ and $\text{recall}(J) = |J \cap J^{*}| / |J^{*}|$.

The four join types introduced in \S\ref{sec:intro}
relax equi-join through parameterized join predicates.
We formalize a join type as follows.

\begin{definition}[Join type]\label{def:jointype}
A \emph{join type} $\mathcal{J}$ specifies a
\emph{predicate template} $\theta_{\mathcal{J}}(s,t;\phi)$
and a set $\Phi_{\mathcal{J}}$ of valid parameter
assignments.
\end{definition}

Instantiating a join type, that is, fixing all its parameters, gives a join program instance.

\begin{definition}[Join program instance]\label{def:instance}
A \emph{join program instance}, or \emph{instance},
of a join type $\mathcal{J}$ is a pair
$x = (\mathcal{J}, \phi)$ with a fixed parameter
assignment $\phi \in \Phi_{\mathcal{J}}$.
Its join predicate is
$\theta_x(s,t) = \theta_{\mathcal{J}}(s,t;\phi)$,
and its join result is
$J_x = S \bowtie_{\theta_x} T$.
The set of all instances of $\mathcal{J}$ is
$U(\mathcal{J}) =
\{(\mathcal{J}, \phi) : \phi \in \Phi_{\mathcal{J}}\}$.
\end{definition}

We write $\mathcal{J}_1$ to $\mathcal{J}_4$ for J1 to J4 and $[c]$ for the truth value of a condition $c$.
The four join types are characterized by the following predicate templates, each followed by an example on Figure~\ref{fig:running} that fixes its parameters and thus gives an instance.

\noindent$\bullet$ \textbf{J1: fuzzy-join.}
Its predicate template is
$\theta_{\mathcal{J}_1}(s,t;\phi)
= [d(s.A,t.B) \leq \delta]$,
where $\phi=(d,\delta)$ consists of a distance
function $d$ and a threshold $\delta$.
It tolerates small textual differences.

\example{\label{ex:j1}
Setting $d$ to edit distance and $\delta = 4$ gives $x_a = (\mathcal{J}_1, (d_{\mathrm{edit}}, 4))$, which joins the pairs with $i = 1, 2, 3$.
}

\noindent$\bullet$ \textbf{J2: transformation-join.}
Its predicate template is
$\theta_{\mathcal{J}_2}(s,t;\phi) = [f(s.A) = t.B]$,
where $\phi = f$ is a program that rewrites a key of $S$ into the format of $T$.
It tolerates different layouts.

\example{\label{ex:j2}
Setting $f$ to a program $f_b$ that rewrites ``102 Pok Fu Lam Road, Hong Kong / Queen Mary Hospital'' into ``Queen Mary Hospital (Hong Kong)'' gives the instance $x_b = (\mathcal{J}_2, f_b)$, which joins the pairs with $i = 4, 6$.
The key of $s_5$ is a key-value list instead, so another instance $x_c = (\mathcal{J}_2, f_c)$, whose program $f_c$ extracts the name and the city, joins the pair with $i = 5$.
}

\noindent$\bullet$ \textbf{J3: semantic-join.}
Its predicate template is
$\theta_{\mathcal{J}_3}(s,t;\phi) = [M(s.A) = t.B]$,
where $\phi = M$ is a mapping from values of $S$ to values of $T$.
It tolerates keys related by meaning.

\example{\label{ex:j3}
Setting $M$ to a mapping $M_d$ that sends CHOP to Children's Hospital of Philadelphia, Hospital for Special Surgery to HSS, and Universit\"atsklinikum Heidelberg to Heidelberg University Hospital gives the instance $x_d = (\mathcal{J}_3, M_d)$, which joins the pairs with $i = 7, 8, 9$.
}

\noindent$\bullet$ \textbf{J4: LLM-direct-join.}
Its predicate template is
$\theta_{\mathcal{J}_4}(s,t;\phi) = [t \in T_\phi(s) \text{ and the LLM judges that } s \text{ and } t \text{ match}]$,
where $T_\phi(s)$ is the set of rows of $T$ that the LLM reads together with $s$.
Its parameter assignment $\phi$ consists of the LLM, its prompt, and the rule that selects $T_\phi(s)$ (e.g., all rows of $T$, a block of rows, or the $k$ rows of $T$ nearest to $s$ under an embedding).
Because the LLM reads the rows themselves, J4 is not limited to one kind of difference.

\example{\label{ex:j4}
Setting the rule to the ten rows of $T$ nearest to $s$, with an LLM and a matching prompt, gives the instance $x_e = (\mathcal{J}_4, \phi_e)$, which can join all eleven pairs in Figure~\ref{fig:running}.
}

Join types can also be \textit{composed}, so that the keys of $S$ that one type rewrites are joined by the predicate of the next type.
For example, the composition J2 + J1, a transformation-join followed by a fuzzy-join, has the predicate $d(f(s.A), t.B) \leq \delta$.
With the program $f_b$ of $x_b$ and the distance function and threshold of $x_a$, it gives the instance $x_f$, which joins the pairs with $i = 10, 11$ in Figure~\ref{fig:running}.

Two actions apply to an instance: 1) \emph{creating} it chooses its parameter assignment for the two tables, and 2) \emph{executing} it computes its join result $J_x$; we describe in \S\ref{sec:prelim:cost} how both are done and what they cost.
Finally, for a set of join types $\mathcal{T}$, which can include compositions, $U(\mathcal{T}) = \bigcup_{\mathcal{J} \in \mathcal{T}} U(\mathcal{J})$ denotes the set of their instances, and $J_X = \bigcup_{x \in X} J_x$ is the join result of a set of instances $X$.

\subsection{Cost of Creating and Executing Instances}
\label{sec:prelim:create}
\label{sec:prelim:cost}

\begin{table}[t]
\small
\caption{Number of LLM calls for instance creation and execution, by join type and creation method.}
\label{tab:cost}
\centering
\setlength{\tabcolsep}{4pt}
\begin{tabular}{@{}llcc@{}}
\toprule
\multirow{2}{*}{Type} & \multirow{2}{*}{\makecell[l]{Creation method (J1 to J3) /\\Rows that the LLM reads (J4)}} & \multicolumn{2}{c}{LLM calls} \\
\cmidrule(l){3-4}
 & & Creating & Executing \\
\midrule
\multirow{2}{*}{J1 to J3} & non-LLM-create & 0 & 0 \\
\cmidrule(l){2-4}
 & LLM-create & $O(1)$ & 0 \\
\midrule
\multirow{3}{*}{J4} & one pair per call & 0 & $O(|S| \cdot |T|)$ \\
\cmidrule(l){2-4}
 & blocks of $b$ rows & 0 & $O(|S| \cdot |T| / b^2)$ \\
\cmidrule(l){2-4}
 & $k$ nearest rows & 0 & $O(|S|)$ \\
\bottomrule
\end{tabular}
\end{table}

We consider the monetary cost of LLM calls, which are charged by the input and output tokens of each call; computation without LLM calls, such as executing a program locally, is not counted.
Table~\ref{tab:cost} lists these calls for each join type.

\stitle{Non-LLM-create.} For J1 to J3, an algorithm can search the parameter space for an assignment that fits rows of the two tables, without calling an LLM.
For example, Auto-FuzzyJoin~\cite{autofuzzyjoin} searches distance functions and thresholds for a configuration whose estimated precision meets a target, Auto-Join~\cite{autojoin} and CST~\cite{cst} search programs of string operators, and Sema-Join~\cite{semajoin} and DataXFormer~\cite{dataxformer} take mappings from table corpora and web tables.

\stitle{LLM-create.} Instead of searching without an LLM, an LLM can write the parameter assignment from sample rows: it sets the threshold of a fuzzy-join by judging sample pairs that are close under an embedding, or writes the program of a transformation-join or the mapping of a semantic-join~\cite{nobari2024tabulax}.
This takes a constant number of LLM calls, independent of the table sizes.
In both cases, the created instance is a program whose execution makes no LLM call.
Moreover, the instance depends on the rows that are read, so creating again on the rows of $S$ that existing instances leave unjoined can return a new instance that joins some of these rows.

\stitle{LLM-direct-join.} Unlike creating an instance of J1 to J3, creating an instance of J4 only chooses the LLM, the prompt, and the rule that selects the rows that the LLM reads, which makes no LLM call.
LLM-based entity matching~\cite{matchgpt, comem} and the semantic-operator joins of LLM-powered query processing~\cite{lotus, fdj, aisql, adaptive-join} differ in this rule and in how they reduce the LLM calls.
Executing the instance, however, calls the LLM on the selected rows, and its calls grow with the tables even with blocks of $b$ rows per call, as in a block nested loop join (Table~\ref{tab:cost}).

\subsection{LLM as Oracle to Estimate Join Quality}
\label{sec:prelim:oracle}

\begin{table}[t]
\small
  \caption{Precision of the LLM-oracle (the fraction of the pairs it judges correct that are correct) with two base LLMs.}
  \label{tab:oracle_precision}
  \centering
  \begin{tabular}{lcc}
    \toprule
    Benchmark & \texttt{gpt-5.4} & \texttt{claude-sonnet-5} \\
    \midrule
    \joinhard{} & 0.990 & 0.986 \\
    AFJ & 0.962 & 0.961 \\
    AJ & 0.985 & 0.986 \\
    SJ & 0.986 & 0.977 \\
    \bottomrule
  \end{tabular}
\end{table}

The instances of J1 to J3 need few LLM calls, but their parameters are chosen from sample rows without knowing the correct pairs, so instances of the same type can return join results of very different quality, which shows only in the pairs that they return.

\example{\label{ex:quality}A fuzzy-join (J1) instance with a larger threshold than $x_a$ also joins the pairs with $i = 1, 2, 3$, but it can also join a row of $S$ that names Children's Hospital of Philadelphia with a row of $T$ that names Children's Hospital of Pittsburgh.
A transformation-join (J2) instance whose program returns the street instead of the city rewrites $s_4$ into ``Queen Mary Hospital (102 Pok Fu Lam Road)'', which matches no row of $T$.}

Since the correct pairs $J^{*}$ are unknown in reality, the precision of a join result cannot be computed from the two tables, but it can be estimated by judging individual pairs.
An LLM suits this task, because judging a given pair is a verification task.
The LLM also has the language understanding and world knowledge to recognize the differences of J1 to J3, such as that CHOP abbreviates Children's Hospital of Philadelphia~\cite{petroni2019language}.
On the benchmarks of \S\ref{sec:evaluation}, at least 96\% of the pairs that an LLM judges correct are correct (Table~\ref{tab:oracle_precision}), although its judgments can be wrong.
We therefore use an LLM as a surrogate oracle for the pairs that an instance returns, called the \emph{LLM-oracle}, and the guarantee of \S\ref{sec:sol:precision} assumes that its judgments are correct; the same subsection discusses what happens otherwise.

Using the LLM-oracle, however, also costs LLM calls.
Judging every pair of a join result would cost as much as an LLM-direct-join on these pairs.
Instead, \system{} judges a uniform sample of the pairs, which estimates the precision of the result at a cost that does not grow with the size of the result (\S\ref{sec:sol:precision}).

\section{Problem Statement}
\label{sec:problem}

This section states the problem that \system{} solves and explains the uncertainty that a solution faces (\S\ref{subsec:uncertainty}), and then shows that the problem is \textsf{NP}-hard even without this uncertainty (\S\ref{subsec:hardness}).

\begin{figure*}[t]
\centering
\input{figure/fig_arch_trace}
\caption{Architecture of \system{}, shown before round 8 of Example~\ref{ex:run}. In each round, the planner (left) pulls one arm (right), and the precision guard accepts or rejects the candidate pairs of the pulled instance; if it rejects them, only its sampled pairs judged correct are committed. The blue box marks the next pull.}
\label{fig:arch}
\end{figure*}

\subsection{Precision-Guaranteed Budget-Constrained Join}
\label{subsec:uncertainty}

Since instances of different join types complement each other (\S\ref{sec:prelim:cost}), a join result can combine instances of several types; for example, the instances $x_a$ to $x_d$ and $x_f$ join all eleven pairs in Figure~\ref{fig:running}.

\stitle{Problem statement (Precision-Guaranteed Budget-Constrained Join).}
Given two tables $S$ and $T$, a set of join types $\mathcal{T}$, a precision target $\tau \in (0, 1]$, and a budget $B > 0$, we aim to create and execute a set of instances $X \subseteq U(\mathcal{T})$ that solves
\begin{equation}\label{eq:problem}
\begin{aligned}
\max_{X \subseteq U(\mathcal{T})}\quad & \text{recall}(J_X) \\
\text{s.t.}\quad & \text{precision}(J_X) \geq \tau, \\
& C(X) \leq B,
\end{aligned}
\end{equation}
where $C(X)$ is the cost of creating and executing $X$ (\S\ref{sec:prelim:cost}).
We assume that each row of $S$ has at most one correct match in $T$.

\stitle{Join under uncertainty.} The objective function in~\eqref{eq:problem} cannot be optimized by enumerating $U(\mathcal{T})$, the set of all instances of the given join types, because what each instance contributes is unknown before a solution spends the budget.
First, an instance exists only after it is created from the rows that it reads (\S\ref{sec:prelim:create}), and its join result $J_x$ and its cost are known only after it is created and executed.
Second, even a suitable type can yield instances of different quality (Example~\ref{ex:quality}).
Executing $x$ reveals $J_x$ but not which of its pairs belong to the correct pairs $J^{*}$, and estimating its precision with the LLM-oracle (\S\ref{sec:prelim:oracle}) costs part of the budget $B$.
Because each of these quantities is revealed only by spending the budget, $X$ cannot be chosen in advance; instead, a solution decides step by step which instance to create or execute next, based on what earlier steps revealed.

\subsection{Hardness Even Without Uncertainty}
\label{subsec:hardness}

If all instances had been created and executed and the correct pairs were known, no uncertainty would remain, and the problem would reduce to the following join instance selection problem.

\stitle{Join instance selection.}
Given a finite set of instances $\mathcal{X}$, the join result $J_x$ and the cost $r_x$ of each instance $x \in \mathcal{X}$, the correct pairs $J^{*}$, a precision target $\tau$, and a budget $B$, select $X \subseteq \mathcal{X}$ that maximizes $\text{recall}(J_X)$ subject to $\text{precision}(J_X) \geq \tau$ and $\sum_{x \in X} r_x \leq B$.

\begin{theorem}\label{thm:nphard}
Join instance selection is \textsf{NP}-hard.
\end{theorem}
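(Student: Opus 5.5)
We prove \textsf{NP}-hardness by a polynomial-time reduction from the decision version of 0/1 \textsc{Knapsack}. An instance consists of items $1,\dots,n$ with positive integer weights $w_i$ and values $v_i$, a capacity $W$, and a target $V$. The question is whether some subset has total weight at most $W$ and total value at least $V$. The idea is to make every instance return only correct pairs, so the precision constraint holds automatically. We also make the instances pairwise disjoint, so recall is additive. Join instance selection then becomes exactly knapsack, with cost playing the role of weight and recall playing the role of value.

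\textbf{Construction.} For each item $i$ we create $v_i$ fresh rows $s_{i,1},\dots,s_{i,v_i}$ in $S$ and $v_i$ fresh rows $t_{i,1},\dots,t_{i,v_i}$ in $T$.
\begin{itemize}
\item The correct pairs are $J^{*}=\{(s_{i,j},t_{i,j})\}$, so each row of $S$ has exactly one correct match, as the problem statement assumes.
\item Instance $x_i$ has $J_{x_i}=\{(s_{i,j},t_{i,j}) : 1\le j\le v_i\}$ and cost $r_{x_i}=w_i$.
\item We set $B=W$ and take any $\tau\in(0,1]$, for instance $\tau=1$.
\end{itemize}
Since every $J_{x_i}\subseteq J^{*}$, any selection $X$ has $\text{precision}(J_X)=1\ge\tau$; for $X=\emptyset$ the paper's convention would need to be fixed, but that case is irrelevant. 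Disjointness gives $\text{recall}(J_X)=\sum_{i\in X}v_i/\sum_i v_i$. Hence a selection with cost at most $B$ and recall at least $V/\sum_i v_i$ exists if and only if the knapsack instance is a yes-instance. The optimization problem is therefore at least as hard as knapsack.

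\textbf{Polynomiality.} The one real concern is that the construction writes down $\sum_i v_i$ rows, which is only pseudo-polynomial, because knapsack is only weakly \textsf{NP}-hard. I would avoid this by reducing from a strongly \textsf{NP}-hard problem instead: \textsc{Maximum Coverage} with costs, or more simply \textsc{Set Cover} / \textsc{Maximum Coverage} with unit costs.
\begin{itemize}
\item Take a universe $\{e_1,\dots,e_m\}$, a family of sets $F_1,\dots,F_n$, and an integer $k$.
\item For each element $e$ create one row $s_e$ and one row $t_e$, with correct pairs $J^{*}=\{(s_e,t_e)\}$.
\item Set $J_{x_i}=\{(s_e,t_e):e\in F_i\}$, $r_{x_i}=1$, $B=k$, and $\tau=1$.
\end{itemize}
Again precision is always $1$, and $\text{recall}(J_X)=|\bigcup_{i\in X}F_i|/m$. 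So recall equals $1$ is achievable within budget $k$ if and only if $k$ sets cover the universe. This reduction is polynomial in $m+n$, so it settles \textsf{NP}-hardness cleanly.

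I would present the set-cover reduction as the main proof. The knapsack variant can be remarked on to show that costs alone, without overlap, also create hardness.

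Finally, a join result $J_x$ is defined through a predicate $\theta_x$ over a join type. So I would note that arbitrary pair sets are realizable, for example by a semantic-join (J3) mapping $M$ with $M(s_e)=t_e$ for $e\in F_i$ and fresh keys. Alternatively, one can simply observe that join instance selection takes the sets $J_x$ as given input.
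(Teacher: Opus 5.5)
Your set-cover reduction is correct and is essentially the paper's proof. The paper reduces from Budgeted Maximum Coverage with the same construction: one correct pair per element, $J_{x_i}$ the pairs of $F_i$, $r_{x_i}=w_i$, $B=W$, so precision is always $1$. Your unit-cost version is a special case of that reduction.

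Drop the closing knapsack remark, though, because its claim is false. Each $J_x$ is listed explicitly, so the pair counts are polynomially bounded. The disjoint case can then be solved in polynomial time by dynamic programming over those counts. It is the overlap among instances, not the costs alone, that creates the hardness.
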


\begin{proof}[Proof sketch]
Join instance selection contains Budgeted Maximum Coverage (BMC)~\cite{khuller1999budgeted} as a special case. Consider instances whose pairs are all correct: the precision constraint then holds for every nonempty selection, and maximizing recall within the budget selects instances whose join results cover the most correct pairs at a total cost of at most $B$. This is BMC, with correct pairs as elements and join results as subsets. BMC is \textsf{NP}-hard, and so is join instance selection. The proof is in the appendix.
\end{proof}

Since join instance selection is the special case of Precision-Guaranteed Budget-Constrained Join without uncertainty, our problem is \textsf{NP}-hard as well.
For join instance selection, a natural rule greedily selects the instance with the most new correct pairs per dollar, which underlies the classic approximation algorithm for budgeted coverage~\cite{khuller1999budgeted}.

\section{The \system{} Framework}
\label{sec:sol}

\subsection{Solution Overview}
\label{sec:sol:overview}

\begin{algorithm}[t]
\caption{\system{}}
\label{alg:main}
\small
\begin{algorithmic}[1]
\Require tables $S$, $T$; join types $\mathcal{T}$; precision target $\tau$; budget $B$
\Ensure join result $J$
\State $J \gets \emptyset$;\ $\mathcal{A} \gets$ the \textsf{CREATE} arms of the join types in $\mathcal{T}$
\While{an affordable arm in $\mathcal{A}$ is expected to add correct pairs}
  \State $a \gets$ the arm with the most expected new correct pairs per dollar
  \If{$a$ is a \textsf{CREATE} arm}
    \State create instances of its join type
    \State add one \textsf{ESTIMATE-EXECUTE} arm per instance to $\mathcal{A}$
  \Else
    \State estimate the quality of its candidate pairs
    \State commit to $J$ the pairs that the precision guard accepts
  \EndIf
  \State update the remaining budget and the expected gains and costs
\EndWhile
\State \Return $J$
\end{algorithmic}
\end{algorithm}

Constructing a join for tables $S$ and $T$ involves two levels of uncertainty: 1) at the \textit{join-type level}, it is unknown which join types are suitable for the given tables; 2) at the \textit{instance level}, the quality of the created instances is unknown, as the same join type can yield instances with high or low output quality.
Both uncertainties can be reduced only by spending budget, since the quality of an instance, and thus the suitability of its join type, is estimated by having the LLM-oracle judge a sample of its output.
The driving principle of \system{} is therefore to spend the budget where it expects the most new correct pairs per dollar, and to update these expectations with the outcome of each expenditure.
This principle requires a balance between learning about uncertain choices and using what has been learned, which is the problem that a budgeted multi-armed bandit (MAB) addresses.

In the classical stochastic MAB model, each arm has an unknown but fixed reward distribution, and pulling an arm reveals a reward~\cite{bubeck2012regret}.
In a budgeted MAB, each pull also has a cost, and the pulls are limited by a budget~\cite{tranthanh2012budget}.
The observed rewards guide \textit{exploration} of uncertain arms and \textit{exploitation} of promising ones.

\system{} adopts this exploration and exploitation mechanism, but join construction differs from the MAB model in two ways.
First, the reward of creating an instance diminishes rather than following a fixed distribution: each new instance works only on the rows that earlier instances left unjoined, typically the harder ones.
Second, pulling an arm does not reveal its reward: the number of correct pairs that an instance adds is not observed but estimated from a sample of its pairs that the LLM-oracle judges, which costs budget.
These differences lead to two levels of arms: 1) \textsf{CREATE} arms, which create instances of join types, and 2) \textsf{ESTIMATE-EXECUTE} arms, which execute the created instances.
The \textsf{ESTIMATE-EXECUTE} arms are added during the run by pulls of the \textsf{CREATE} arms, as in bandits whose arms are acquired at a cost~\cite{kalvit2021arm}.
Our key novelty is to jointly and adaptively decide which join instances to create and which of their outputs to evaluate under a shared budget.
This combines the scalability of program-based joins with selective LLM verification, while maintaining a statistical precision guarantee on the accumulated result.

Guided by these observations, \system{} adopts the architecture shown in Figure~\ref{fig:arch}, with its main loop summarized in Algorithm~\ref{alg:main}.
In each round, the planner (left) pulls an arm from either of the two levels (right).
We briefly describe the \textsf{CREATE} and \textsf{ESTIMATE-EXECUTE} arms and the planner below.

\stitle{\textsf{CREATE}: exploring join types.}
\system{} explores join types through \textsf{CREATE} arms in the upper row of Figure~\ref{fig:arch}.
Each \textsf{CREATE} arm represents a method for creating instances of a join type (\S\ref{sec:prelim:create}), so a join type can have multiple arms.
Each pull creates instances using rows of $S$ that have not yet been joined.
To predict the benefit of further creation, \system{} uses a diminishing-return heuristic: the best recent gain of an arm serves as an optimistic prediction of its next gain.
Figure~\ref{fig:dimret} supports this heuristic with an experiment that runs each LLM-create fuzzy-, transformation-, and semantic-join method repeatedly on 141 benchmark datasets, each time on the rows that earlier runs left unjoined.
Depending on the method, 85.3\% to 94.6\% of the later runs gain at most as much recall as the better of the two runs before them.

\stitle{\textsf{ESTIMATE-EXECUTE}: exploiting created instances.}
Each instance produced by a \textsf{CREATE} arm introduces an \textsf{ESTIMATE-EXECUTE} arm in the lower row of Figure~\ref{fig:arch}.
For J1 to J3, the instance is executed upon creation without LLM calls, producing candidate pairs without adding them to $J$.
For J4, execution instead occurs block by block when its arm is pulled.
Pulling an \textsf{ESTIMATE-EXECUTE} arm invokes the LLM-oracle to judge a random sample of its remaining candidate pairs, called a batch, to estimate their precision and the number of correct pairs.
The precision guard (\S\ref{sec:sol:precision}) commits the batch only if the lower confidence bound on the precision of $J$, including the batch, remains at least $\tau$.
These estimates also guide further creation from the same \textsf{CREATE} arm.

\stitle{Planner: allocating the budget.}
Since LLM calls for creation, execution, and LLM-oracle checks share a limited budget, the planner greedily pulls the arm with the most expected new correct pairs per dollar across both levels.
This balances exploration and exploitation using predicted gains and costs.
As committed pairs cover rows, the potential gains of other instances decrease, and the planner updates its predictions with new quality estimates.

\begin{table*}[t]
\caption{Trace of \system{} in Example~\ref{ex:run}. $|R|$ and Spent are the values after the round.}
\label{tab:example}
\centering
\footnotesize
\begin{tabularx}{\textwidth}{@{}llXrr@{}}
\toprule
Round & Arm pulled & Outcome & $|R|$ & Spent \\
\midrule
1--4 & \textsf{CREATE} A1, A3, A5, A7 & Untried arms predict all 257 rows; the pulls create $x_1$ to $x_4$ (J1 to J4). & 257 & \$0 \\
5 & \textsf{ESTIMATE-EXECUTE} $x_1$ & 100 candidate pairs; 40 of 40 sampled pairs judged correct; $C^- = 93$, $L = 0.93$; commit 100. & 157 & \$0.04 \\
6 & \textsf{CREATE} A1 & On the remaining rows, A1 creates $x_5$ with a larger threshold. & 157 & \$0.04 \\
7 & \textsf{ESTIMATE-EXECUTE} $x_5$ & 80 candidate pairs; 12 of 40 judged correct; $C^- = 15$, $L = 0.71 < \tau$; reject, but commit the 12 judged correct; A1 drops out. & 145 & \$0.08 \\
8 & \textsf{ESTIMATE-EXECUTE} $x_2$ & 30 candidate pairs, all judged correct; commit 30. & 115 & \$0.12 \\
9 & \textsf{CREATE} A3 & The new instance $x_6$ has no candidate pairs; A3 drops out. & 115 & \$0.12 \\
10--12 & \textsf{CREATE} A2, A4, A6 & The untried LLM-create arms create $x_7$ (J1), $x_8$ (J2), and $x_9$ (J3). & 115 & \$0.72 \\
13 & \textsf{ESTIMATE-EXECUTE} $x_9$ & 55 candidate pairs; 36 of 40 judged correct; $C^- = 44$ of $A = 51$ ($0.86 < \tau$), but $L = 0.93$; commit 51, which also join the rows of 9 candidate pairs of $x_3$. & 64 & \$0.76 \\
14 & \textsf{ESTIMATE-EXECUTE} $x_8$ & 17 candidate pairs, all judged correct; commit 17. & 47 & \$0.80 \\
15 & \textsf{ESTIMATE-EXECUTE} $x_7$ & 12 candidate pairs, all judged; commit the 10 judged correct. & 37 & \$0.84 \\
16 & \textsf{ESTIMATE-EXECUTE} $x_4$ & A block of 20 rows returns 17 pairs, all judged; commit the 16 judged correct; the budget runs out before $x_3$ (4 candidate pairs) is checked. & 21 & \$1.00 \\
\bottomrule
\end{tabularx}
\end{table*}

\example{\label{ex:run}
Table~\ref{tab:example} traces \system{} on the hospital tables of Example~\ref{ex:hospital} with $\tau = 0.9$ and $B = \$1$; the examples in \S\ref{sec:sol:create} to \S\ref{sec:sol:precision} discuss its key rounds.
Of the 257 rows of $S$, 120 differ from their matches by small textual edits (J1), 50 by layout (J2), 55 by abbreviation or translation (J3), and 32 can only be matched by an LLM (J4).
The pool has seven \textsf{CREATE} arms, A1 to A7 in Figure~\ref{fig:arch}: A1, A3, and A5 are the non-LLM-create arms of J1 to J3, A2, A4, and A6 are their LLM-create arms, and A7 is the arm of J4.
For illustration, each pull of a \textsf{CREATE} arm creates a single instance, a pull of an LLM-create arm costs \$0.20, each LLM-oracle check judges 40 sampled pairs and costs \$0.04, and executing the instance of J4 on a block of 20 rows of $S$ costs \$0.12; instances are numbered in their order of creation.
Figure~\ref{fig:arch} shows the state before round 8.
The non-LLM-create fuzzy-join arm A1 has created two instances, $x_1$ and $x_5$: the first was committed, and the second, created on the rows that the first left unjoined, was rejected, so A1 expects no further gain.
The three LLM-create arms, A2, A4, and A6, are untried.
The planner next runs an LLM-oracle check on the J2 instance $x_2$, whose 30 candidate pairs promise more new correct pairs per dollar than any untried arm.
}

\subsection{\texorpdfstring{\textsf{CREATE}: Exploring Join Types}{CREATE: Exploring Join Types}}
\label{sec:sol:create}

A \textsf{CREATE} arm explores a join type by creating instances of it; the quality estimates of these instances (\S\ref{sec:sol:instance}) indicate how effective the type is on the given tables.

\begin{figure}[t]
\centering
\includegraphics[width=\columnwidth]{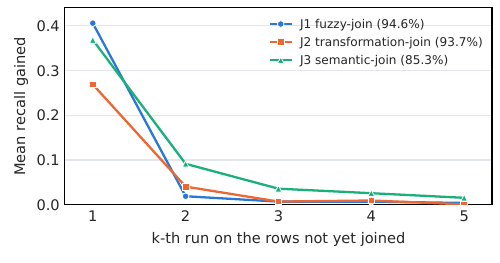}
\caption{Mean recall gained by each run of a join method on the source rows that its earlier runs left unjoined, over the 141 datasets of the four benchmarks. Percentages give the runs after the first whose gain is at most the larger gain of the two runs before them, among the runs where one of these gains is nonzero.}
\label{fig:dimret}
\end{figure}

\stitle{Intuition.}
The gain of a \textsf{CREATE} arm is the number of new correct pairs that its instances can eventually add to $J$.
The planner predicts it with the diminishing-return heuristic (\S\ref{sec:sol:overview}), which guides exploration but does not statistically bound every future instance.

Figure~\ref{fig:dimret} tests this heuristic outside the planner on the 141 datasets of \S\ref{sec:evaluation}.
Each LLM-create method of J1 to J3 (J1 with \texttt{bge-large-en}, no compositions) runs five times, each time on the rows of $S$ whose correct pairs earlier runs missed (by the ground truth), without a precision check or budget.
From the first to the second run, the mean recall gained drops from 0.406 to 0.019 for J1, from 0.269 to 0.041 for J2, and from 0.369 to 0.092 for J3, and later runs add at most 0.036.
Applied to the latest two runs, as in our experiments, the heuristic predicts that a run gains at most as much recall as the better of the two runs before it.
This prediction holds for 94.6\%, 93.7\%, and 85.3\% of the runs after the first for J1, J2, and J3, counting only runs where one of the two earlier runs gained recall.
Semantic-join declines more slowly, because each run has the LLM write a mapping for a new sample of source values.

\stitle{Creating instances and using feedback.}
Pulling a \textsf{CREATE} arm applies its creation method (\S\ref{sec:prelim:create}) to the rows of $S$ not yet joined and to $T$.
For the LLM-direct-join (J4), each LLM and rule that selects the rows the LLM reads (e.g., the 10 rows of $T$ nearest to a row of $S$) forms one \textsf{CREATE} arm, which creates its instance once and without LLM calls (\S\ref{sec:prelim:cost}); the instance is then executed block by block (\S\ref{sec:sol:instance}).
The quality estimates of each instance update the prediction of the \textsf{CREATE} arm that created it.

\example{
In Example~\ref{ex:run}, the first instance $x_1$ of A1, the non-LLM-create fuzzy-join arm, commits 100 pairs in round 5, so A1 predicts 100 new correct pairs for its next pull.
Its second instance $x_5$, created on the rows that $x_1$ left unjoined, is rejected in round 7.
Only the 12 sampled pairs of $x_5$ judged correct are committed, and since the latest pull of A1 has no accepted batch, the prediction of A1 drops to zero.
}

\stitle{Algorithm details.}
\label{sec:sol:residual}
Let $J_\ell$ be the committed result before round $\ell$, and let $R_\ell$ be the rows of $S$ that no pair of $J_\ell$ joins, that is, the rows of $S$ not yet joined.
The planner tracks $|R_\ell|$ in its \emph{Rows of $S$ not yet joined} counter in Figure~\ref{fig:arch}.

For the $h$-th pull of a \textsf{CREATE} arm $a$, let $\mathcal{Q}_{a,h}$ contain the batches of its instances that the precision guard accepted.
For each such batch $Q$, let $|Q|$ be its size and $\hat p_Q$ the fraction of its sampled pairs that the LLM-oracle judged correct (\S\ref{sec:sol:instance}).
Since each row of $S$ has at most one correct match (\S\ref{subsec:uncertainty}), a batch adds at most $|R_\ell|$ new correct pairs.
The record of pull $h$, adjusted to the current rows of $S$ not yet joined, is $g_{a,h}(\ell)=\max_{Q\in\mathcal{Q}_{a,h}}\hat p_Q\min(|Q|,|R_\ell|)$, or zero if $\mathcal{Q}_{a,h}$ is empty.
A pull can create several instances with overlapping outputs (e.g., one for each embedding model of a fuzzy-join), and taking the best batch avoids summing them.

Let $\mathcal{H}_{a,\ell}$ index the latest $K$ pulls of $a$ ($K = 2$ in our experiments and $K = 1$ in Example~\ref{ex:run}).
Its predicted gain is $\widehat G_\ell(a)=|R_\ell|$ if $a$ has not been tried, and $\widehat G_\ell(a)=\max_{h\in\mathcal{H}_{a,\ell}}g_{a,h}(\ell)$ otherwise.
The recent window allows earlier high-gain records to expire.

\subsection{\texorpdfstring{\textsf{ESTIMATE-EXECUTE}: Exploiting Created Instances}{ESTIMATE-EXECUTE: Exploiting Created Instances}}
\label{sec:sol:instance}

Even a join type that works well on the given tables can produce an instance of low output quality (Example~\ref{ex:quality}).
The candidate pairs of an instance do not reveal how many are correct, yet this is needed to guide further creation and enforce the precision target for $J$.

\stitle{Intuition.}
Having the LLM-oracle judge every candidate pair would cost as much as an LLM-direct-join on these pairs (\S\ref{sec:prelim:oracle}).
For example, a \joinhard{} dataset can have 250{,}000 correct pairs, and judging even only these pairs costs about \$108 (embedding top-1, which judges one pair per row of $S$, in Table~\ref{tab:varying_budget}).
An LLM-oracle check therefore judges only a random sample of the batch, from which the precision guard decides whether to commit it (\S\ref{sec:sol:precision}).

\example{
In round 7 of Example~\ref{ex:run}, the instance $x_5$, a fuzzy-join with a larger threshold as in Example~\ref{ex:quality}, has 80 candidate pairs, but only 12 of the 40 sampled pairs are judged correct.
The sample thus reveals a low precision that the number of candidate pairs hides.
The precision guard rejects the batch but still commits the 12 sampled pairs judged correct (\S\ref{sec:sol:precision}).
}

\stitle{Instances of J4.}
Unlike an instance of J1 to J3, an instance of J4 needs LLM calls for every row of $S$ that it processes.
Executing it on all rows not yet joined would thus cost as much as an LLM-direct-join over the whole table (\S\ref{sec:prelim:cost}).
Since J4 can still join pairs that no other type matches, \system{} executes its instance one \emph{block} at a time.
Each pull of its arm executes the instance on the next block of 100 rows of $S$ not yet joined, and the LLM-oracle checks the resulting candidate pairs as for any other instance.
The arm stays in the pool, and the planner pulls it again whenever its next block promises the most new correct pairs per dollar.

\stitle{Algorithm details.}
\textit{Candidate pairs.} When the arm of an instance $x$ of J1 to J3 is introduced, \system{} executes $x$ on the rows of $S$ not yet joined, without LLM calls (\S\ref{sec:prelim:cost}) and without adding pairs to $J$.
This gives the pairs of the join result $J_x$ (Definition~\ref{def:instance}) whose rows of $S$ are not yet joined.
As other batches enter $J$, the remaining candidate pairs of $x$ shrink to $J_{x,\ell}=\{(s,t)\in J_x:s\in R_\ell\}$, whose size bounds the new correct pairs that the arm can add (\S\ref{sec:sol:scheduling}).

\textit{Batch and sample.} When the arm is pulled, its remaining candidate pairs form the batch $Q_k$, the $k$-th batch of the run, which stays fixed while it is checked.
An empty batch is skipped without any LLM-oracle call.
Otherwise, \system{} samples pairs of $Q_k$ uniformly without replacement, and if the LLM-oracle judges $c_k$ of the $m_k$ sampled pairs correct, $\hat p_{Q_k}=c_k/m_k$ estimates its precision.

\textit{Feedback.} The estimate $\hat p_{Q_k}$ updates the predicted quality of the pending instances of the same \textsf{CREATE} arm, and an accepted batch also updates the record of that arm (\S\ref{sec:sol:create}).
Which pairs enter $J$, however, is decided by the precision guard from statistical bounds rather than from $\hat p_{Q_k}$ (\S\ref{sec:sol:precision}).

\subsection{Planner: Allocating the Budget}
\label{sec:sol:scheduling}
\label{sec:sol:costgain}

Because creation, execution, and LLM-oracle checks share the budget, neither extreme works well.
On the one hand, running an LLM-oracle check on every instance as soon as it is created can spend the calls of the LLM-oracle on outputs that add few new pairs.
On the other hand, creating all instances before any LLM-oracle check loses quality feedback and the chance to focus later creation on the rows still unjoined.

\stitle{Intuition.}
The planner compares arms by their predicted new correct pairs per dollar (\S\ref{sec:sol:overview}): the predicted gain of an arm counts the new correct pairs that it can eventually add, and its predicted cost counts the LLM calls still needed to obtain these pairs and have the LLM-oracle judge them.

\example{
In round 13 of Example~\ref{ex:run}, four instances are pending: $x_3$ of A5 with 13 candidate pairs, $x_7$ of A2 with 12, $x_8$ of A4 with 17, and $x_9$ of A6, the LLM-create semantic-join arm, with 55.
The planner runs an LLM-oracle check on $x_9$ first, since it promises the most new correct pairs per dollar.
Committing $x_9$ also joins the rows of 9 correct candidate pairs of $x_3$, which leaves $x_3$ with only 4 candidate pairs.
Its predicted gain per dollar then falls below that of a block of the LLM-direct-join instance $x_4$ of A7, which the planner pulls with the last \$0.16 in round 16.
}

\stitle{Gain and cost predictions.}
A \textsf{CREATE} arm uses its diminishing-return prediction (\S\ref{sec:sol:create}).
For the \textsf{ESTIMATE-EXECUTE} arm of an instance $x$ of J1 to J3, whose candidate pairs are known when the arm is introduced (\S\ref{sec:sol:instance}), the predicted gain is $\widehat G_\ell(x)=|J_{x,\ell}|\,\widetilde p_x$, where $x$ also denotes its arm.
Here, $\widetilde p_x$ is the observed precision of the instance of the same \textsf{CREATE} arm whose sample the LLM-oracle judged most recently, or 1 if there is none.
For an instance of J4, whose next block has not been executed, $|J_{x,\ell}|$ is replaced by the size of the next block times the fraction of the rows of its earlier blocks for which its LLM returned a match.
Before its first block, it is replaced by the size of the next block.
Similarly, $\widetilde p_x$ is the precision observed on its earlier blocks, or 1 before its first block.
The prediction optimistically assumes the batch is accepted.

The cost $\widehat r_\ell(a)$ estimates the LLM charges still needed for the predicted contribution.
For a \textsf{CREATE} arm, it includes creation and an anticipated LLM-oracle check of an output.
For an \textsf{ESTIMATE-EXECUTE} arm, it includes the LLM-oracle check and, for J4, the LLM calls of executing the next block.
Including an LLM-oracle check in both arm types' predicted costs makes their gain-per-dollar ratios well defined.
Paid costs are not charged again.

For an operation that calls an LLM, \system{} estimates the cost of its first run from its prompt tokens and an expected output length, and the cost of a later run from the previous run; an LLM-oracle check is priced at the average cost of the samples judged so far.
This price assumes that one sample suffices, and further samples (\S\ref{sec:sol:precision}) are charged when drawn.

\stitle{Algorithm details.}
Let $B_\ell$ be the remaining budget.
Among the available arms $\mathcal{A}_\ell$ with positive predicted gain and predicted cost at most $B_\ell$, the planner greedily chooses $a_\ell \in \operatorname*{arg\,max}_{a \in \mathcal{A}_\ell} \widehat G_\ell(a) / \widehat r_\ell(a)$.
An optimistic prediction keeps untried \textsf{CREATE} arms eligible for exploration~\cite{tranthanh2012budget}.
A \textsf{CREATE} arm waits until at least one instance from its previous pull has had an LLM-oracle check, unless those instances have no remaining candidate pairs.
This lets feedback from one pull inform further creation.

The planner subtracts actual charges from $B_\ell$ and stops when no available arm has positive predicted gain within the remaining budget.
Before each LLM call, it caps the output tokens at the amount the budget can cover, and it does not send the call if the budget cannot cover the input.

\subsection{Statistical Precision Guarantee}
\label{sec:sol:precision}
\label{sec:sol:estimate}

This subsection describes the precision guard, which decides whether a batch can enter $J$.
It then shows that, if the verdicts of the LLM-oracle are correct, the final $J$ satisfies $\Pr[\text{precision}(J) \geq \tau] \geq 1-\alpha$ for a confidence level $1-\alpha$, which the quality estimates of the planner alone do not ensure.

\stitle{Checking each batch.}
From the sample that an LLM-oracle check judges (\S\ref{sec:sol:instance}), \system{} computes a \emph{lower confidence bound} on the precision of the batch, which exceeds the true precision only with a small error probability.
Checking $J$ only once, after the last round, would put the whole run at risk: if this check failed, $J$ could not be returned with the guarantee.
\system{} therefore checks each new batch before committing it, so that $J$ keeps the guarantee throughout the run and later rounds can react to rejected batches.
Each check can be wrong with its error probability, so the guarantee for the final $J$ must cover the errors of all checks.

\stitle{Naive approaches.}
\textit{Checking each batch alone}, that is, committing it only if the lower confidence bound on its own precision is at least $\tau$, does not let batches compensate for each other.
With $\tau = 0.9$, two batches of equal size with precision 0.95 and 0.85 together have precision 0.9, yet the second is rejected with high probability.
\textit{Checking $J$ as a whole}, with a fresh sample from $J$ and the new batch at every check, lets batches compensate but discards the verdicts of earlier samples.
Moreover, a union bound must split $\alpha$ over all checks, for example $\alpha/2^j$ for the $j$-th check.
With $\alpha = 0.1$, a sample of 100 pairs, all judged correct, from at least 10{,}000 pairs then gives a bound of about 0.97 at the first check but below 0.9 from the twelfth check on.
From then on, no batch can be committed at $\tau = 0.9$ unless the samples grow.

\stitle{Incremental precision guard.}
We propose an \emph{incremental precision guard}, which bounds the correct pairs of each batch once and reuses this bound at every later check.
From the sample of a batch, it computes a lower confidence bound on the number of correct pairs that the batch would add to $J$, and it records the bound when the batch is committed.
The batches in $J$ share no row of $S$ (\S\ref{sec:sol:instance}).
Hence, whenever all recorded bounds hold, their sum divided by the number of pairs in $J$ is a lower bound on the precision of $J$.
A new batch is committed only if this bound, with the batch included, is at least $\tau$, so batches compensate for each other without new verdicts.
In round 13 of Example~\ref{ex:run}, the bound of the batch of $x_9$ alone is 0.86, below $\tau = 0.9$, but the batch is committed because the bound on $J$ with it included is 0.93 (Table~\ref{tab:example}).

\stitle{Algorithm details.}
\textit{Bound of a batch.} Let the batch $Q_k$ (\S\ref{sec:sol:instance}) have $N_k = |Q_k|$ pairs, of which an unknown number $C_k$ are correct.
\system{} draws $m_k = \min(n, N_k)$ of its pairs uniformly without replacement, with $n = 100$, and the LLM-oracle judges $c_k$ of them correct.
If the batch is \emph{fully judged}, that is, $m_k = N_k$, then $C_k = c_k$ is known, and its bound is $C^{-}_k = c_k$.
Otherwise, it is \emph{partially sampled}, and its bound at confidence level $1-\alpha_k$ is $C^{-}_k = \min\{C' \in \{0, \dots, N_k\} : \Pr[Y \geq c_k] > \alpha_k\}$, where $Y \sim \mathrm{Hypergeometric}(N_k, C', m_k)$.
If the batch had fewer than $C^{-}_k$ correct pairs, at least $c_k$ sampled pairs would be judged correct with probability at most $\alpha_k$.
If the output-token cap (\S\ref{sec:sol:scheduling}) cuts an LLM-oracle response short, its verdicts are discarded, and the batch is decided by its earlier checks, or rejected if it has none.

\textit{Commit rule.} Let $A_k \leq N_k$ be the number of pairs that batch $k$ would add to $J$; a batch with $A_k = 0$ is skipped.
For each committed batch, the precision guard records $(C^{-}_k, A_k)$.
It commits a new batch $k$ if $L = (\sum_i C^{-}_i + C^{-}_k) / (\sum_i A_i + A_k) \geq \tau$, where $i$ ranges over the batches in $J$, and an empty $J$ has precision 1.
If $L < \tau$ and the batch is partially sampled, the LLM-oracle judges $\min(n, N_k - m_k)$ more pairs, drawn uniformly without replacement from those not yet judged, and the guard recomputes $C^{-}_k$, $A_k$, and $L$ over all judged pairs.
This repeats at most twice and only if the budget allows, and the batch is rejected if $L < \tau$ after its last sample.

\textit{Error probabilities.} A fully judged batch needs no error probability, because its bound is exact.
If batch $k$ is the $j$-th partially sampled batch, whether committed or rejected, each of its at most three bounds uses $\alpha_k = \alpha / (3 \cdot 2^j)$.
The error probabilities of all possible bounds then sum to at most $\sum_{j \geq 1} \alpha / 2^j = \alpha$.

\textit{Reusing the verdicts.} A committed batch leaves out its sampled pairs judged incorrect and adds $A_k = N_k - (m_k - c_k)$ pairs.
Since the left-out pairs are incorrect, $C^{-}_k$ still bounds its correct pairs, and a fully judged batch adds only its $c_k$ pairs judged correct.
A rejected batch still commits its sampled pairs judged correct, recorded with $C^{-}$ and $A$ both equal to their number; this record has ratio 1 and so cannot lower $L$ below $\tau$.

\begin{proposition}[Precision guarantee]\label{prop:gate}
Assume that the verdicts of the LLM-oracle are correct. Then the final join result $J$ of \system{} satisfies $\Pr[\text{precision}(J) \geq \tau] \geq 1-\alpha$.
\end{proposition}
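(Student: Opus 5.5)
The plan is to define a single ``good event'' $E$ on which every lower confidence bound that the precision guard ever computes is valid. On $E$ the commit rule forces $\text{precision}(J) \geq \tau$ deterministically. It then remains to show $\Pr[\neg E] \leq \alpha$ by a union bound over the bounds, with care for the adaptivity of the run. Concretely, for each partially sampled batch $k$ and each of its at most three checks, let $E_{k,r}$ be the event that the bound $C^{-}_k$ computed at that check satisfies $C^{-}_k \leq C_k$. Here $C_k$ is the true number of correct pairs of $Q_k$. Let $E$ be the intersection of all $E_{k,r}$. Fully judged batches and the records of sampled pairs judged correct have exact bounds, so they contribute no error.

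\emph{Deterministic step.} Assume $E$ holds. Each committed record $(C^{-}_i, A_i)$ corresponds to a set of pairs added to $J$. These sets are disjoint, because the batches in $J$ share no row of $S$, and candidate pairs are restricted to $R_\ell$. Adding the left-out sampled pairs judged incorrect removes only incorrect pairs, and the oracle verdicts are assumed correct. Hence the number of correct pairs in the added set of record $i$ is at least $C^{-}_i$. For partially sampled batches this uses $C^{-}_i \leq C_i$. For exact records it holds by definition. Summing over records gives $|J \cap J^{*}| \geq \sum_i C^{-}_i$ and $|J| = \sum_i A_i$. The commit rule was checked at the last commit that changed $J$. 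Records with ratio $1$ added afterwards cannot push the ratio below $\tau$, since $(a+c)/(b+c) \geq \tau$ whenever $a/b \geq \tau$ and $\tau \leq 1$. Therefore $\text{precision}(J) \geq \sum_i C^{-}_i / \sum_i A_i \geq \tau$, with the empty case having precision $1$.

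\emph{Probabilistic step.} I will first show that a single bound satisfies $\Pr[C^{-}_k > C_k \mid \mathcal{F}] \leq \alpha_k$. Here $\mathcal{F}$ is the history up to the moment batch $k$ is fixed, which includes the index $j$ and hence $\alpha_k$. Conditional on $\mathcal{F}$, the batch $Q_k$, $N_k$, $C_k$ and $m_k$ are fixed, and the sample is uniform without replacement. So $c_k \sim \mathrm{Hypergeometric}(N_k, C_k, m_k)$. By the definition of $C^{-}_k$ as a minimum, the event $C^{-}_k > C_k$ means $\Pr[Y \geq c_k] \leq \alpha_k$ for $Y$ with parameter $C' = C_k$. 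This is the standard test-inversion argument. The survival function $u \mapsto \Pr[Y \geq u]$ is decreasing, so the event is $\{c_k \geq u^*\}$ for the smallest $u^*$ whose tail is at most $\alpha_k$, and it has probability at most $\alpha_k$. For the second and third checks, the judged pairs are again a uniform sample without replacement of the enlarged size, since the extra pairs are drawn uniformly from the unjudged ones. Thus the same argument applies with $m_k$ replaced by the cumulative sample size.

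Then I sum over checks, conditioning via the tower property. The $j$-th partially sampled batch has three bound slots, each with level $\alpha/(3\cdot 2^j)$, whether or not the run reaches that batch or check. This gives $\Pr[\neg E] \leq \sum_{j\geq1} 3\cdot \alpha/(3\cdot 2^j) = \alpha$.

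I expect the main obstacle to be making this union bound rigorous under adaptivity. The number of batches, their contents, and whether further samples are drawn all depend on earlier random outcomes. Moreover, the three checks of one batch are nested rather than independent. I would handle this by indexing the error events by the deterministic slot $(j, r)$ rather than by random batch identity. Defining $F_{j,r}$ as the event that slot $(j,r)$ is used and its bound is invalid, I would bound $\Pr[F_{j,r}] = \mathbb{E}[\Pr[F_{j,r}\mid \mathcal{F}_{j,r}]] \leq \alpha/(3\cdot 2^j)$. Here $\mathcal{F}_{j,r}$ is the history before that sample is drawn. No independence across slots is needed.
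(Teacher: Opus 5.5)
The structure of your proof matches the paper's. You define a good event $E$ on which every bound holds. The deterministic step is right: the records are disjoint, each has at least $C^{-}_i$ correct pairs on $E$, and the commit rule together with the ratio-one records keeps $\sum_i C^{-}_i \geq \tau \sum_i A_i$, which is the paper's induction over commits in other words. The union bound with level $\alpha/(3\cdot 2^j)$ per slot is also the paper's.

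\textbf{The gap.} The step that fails is the one you chose to make the union bound rigorous: the inequality $\Pr[F_{j,r} \mid \mathcal{F}_{j,r}] \leq \alpha/(3\cdot 2^j)$, with $\mathcal{F}_{j,r}$ the history just before the $r$-th sample is drawn. For $r \geq 2$, this history already contains the verdicts of the earlier samples of the same batch. The cumulative count is then a fixed number plus a fresh draw from the unjudged pairs. Given $\mathcal{F}_{j,r}$, it is no longer $\mathrm{Hypergeometric}(N_k, C_k, m)$, and the inequality is false pointwise.

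Here is a concrete case. Take $\alpha = 0.1$, $\tau = 0.9$, $J$ empty, and a first partially sampled batch with $N_k = 1000$ and $C_k = 500$. Suppose its first sample has $70$ of $100$ pairs judged correct. Then $C^{-}_k$ is close to $600$ and $L \approx 0.62 < \tau$, so the guard draws $100$ more pairs. Given this history, the cumulative count is $70$ plus a $\mathrm{Hypergeometric}(900, 430, 100)$ draw with mean about $48$. The second bound then exceeds $C_k$ with probability roughly $0.8$, far above $\alpha_k \approx 0.017$. The tower identity holds, but the inner bound you feed into it does not. Your remark that the enlarged sample is ``again a uniform sample'' has the same issue: it holds marginally, not conditional on the later check actually being performed, because that decision depends on the earlier verdicts.

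\textbf{The repair.} Use the conditioning from your single-bound paragraph, applied to all three slots of a batch at once. Condition on the history $\mathcal{F}_k$ before the \emph{first} sample of the batch. View the successive draws as prefixes of one uniformly random permutation $\pi_k$ of $Q_k$. For every $\ell \in \{1,2,3\}$, define $C^{-}_{k,\ell}$ from the count in the first $\min(\ell n, N_k)$ entries, whether or not the run performs the $\ell$-th check. A prefix that covers all of $Q_k$ gives the exact count, which is consistent with the rule for fully judged checks.

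Each such count is hypergeometric given $\mathcal{F}_k$. The event that slot $(j,\ell)$ is used with an invalid bound is contained in $\{C^{-}_{k,\ell} > C_k\}$. The union bound then gives $\alpha/2^j$ per batch and $\alpha$ overall, and the decision to draw more pairs never enters the conditioning. This is exactly how the paper's batch lemma and union-bound lemma proceed.
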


{
    \begin{proof}[Proof sketch]
    Fix the history of the run before the first sample of a partially sampled batch. The pairs of the batch are then fixed, and its samples read prefixes of fixed lengths of one uniformly random order of these pairs, so each possible bound exceeds $C_k$ with probability at most its error probability, whether or not the run draws the sample. By the union bound over all possible bounds of all partially sampled batches, whose error probabilities sum to at most $\alpha$, with probability at least $1-\alpha$ none of these bounds exceeds the $C_k$ of its batch. On this event, each record in $J$ has at least $C^{-}_k$ correct pairs among its $A_k$ pairs, because the left-out pairs are incorrect and the pairs judged correct are correct. The records do not overlap, so the precision of $J$ is at least $\sum_k C^{-}_k / \sum_k A_k$, which every commit keeps at least $\tau$. The full proof is in the appendix.
    \end{proof}
}

\example{
At the end of Example~\ref{ex:run}, with $\alpha = 0.1$, the final $J$ has 236 pairs, and the recorded bounds give $L = 222/236 = 0.94 \geq \tau$ (Table~\ref{tab:example}).
Only the batches of $x_1$, $x_5$, and $x_9$ were partially sampled, with $\alpha_k = \alpha/(3 \cdot 2^j) \approx 0.017$, $0.008$, and $0.004$ for $j = 1, 2, 3$.
Hence, if the verdicts of the LLM-oracle are correct, all three bounds hold with probability at least $1 - \alpha = 0.9$, and then $\text{precision}(J) \geq 0.94$, as in the proof of Proposition~\ref{prop:gate}.
In this example, 232 of the 236 pairs are correct, a precision of 0.98 and a recall of 0.90.
}

\stitle{Discussion.} The guarantee of Proposition~\ref{prop:gate} does not depend on which batches \system{} produces or in which order it checks them, so the planner (\S\ref{sec:sol:scheduling}) can rely on estimates without affecting it.
The assumption that the verdicts of the LLM-oracle are correct is a practical limitation: the LLM-oracle can misjudge a pair, and on the benchmarks, 96\% to 99\% of the pairs that it judges correct are correct (Table~\ref{tab:oracle_precision}).
When it misjudges pairs, the guarantee holds for the precision as judged by the LLM-oracle, and the true precision of $J$ can fall below $\tau$.
Experiment~5 observes this only for targets $\tau \geq 0.95$, near the precision of the LLM-oracle (0.962 on AFJ).

\section{Evaluation}
\label{sec:evaluation}

\subsection{Evaluation Setup}
\label{sec:exp-setup}

\stitle{Benchmarks.}
\system{} is evaluated on three benchmarks from prior join work and on \joinhard{}, which scales datasets of two of them to 250{,}000 rows.
\circled{1} AFJ~\cite{autofuzzyjoin}: 50 datasets that join the entity names of a 2013 DBpedia snapshot with those of later snapshots, where the names of one entity differ by small edits, added disambiguators, or renaming; exact matches are removed.
\circled{2} AJ~\cite{autojoin}: 31 pairs of web tables on 17 topics whose join columns hold the same entities in different formats (e.g., \texttt{John Smith} vs.\ \texttt{Smith, J.}).
\circled{3} SJ~\cite{semajoin}: 50 pairs of columns from web tables and lists whose values are related by meaning (e.g., a US state and its capital).
The source and target tables have 358 and 3{,}295 rows on average in AFJ, 116 and 179 in AJ, and 129 and 66 in SJ.
\circled{4} \joinhard{}: ten datasets whose source tables have 250{,}000 rows.
Five are adapted from AFJ: each keeps the rows and ground-truth pairs of its AFJ dataset, adds as many new pairs whose source value is a fixed transformation of the target value (word-order rearrangement or entity re-serialization), and fills the source table with entity names from unrelated datasets, while the target table keeps its original size (3{,}003 to 5{,}774 rows).
The other five are adapted from AJ and join two 250{,}000-row tables one to one on composite keys: each half of the source keys is rewritten by one of two composed transformation rules, and half of the source rows also carry a transposition of two adjacent letters.
The rules are listed in the open-source repository.

\stitle{Baselines.}
\system{} is compared with 16 baselines in five families.
\emph{(i) Program joins}, which make no LLM call:
\circled{1} Auto-FuzzyJoin~\cite{autofuzzyjoin} tunes the distance function and the threshold of a fuzzy-join to a target precision without labeled pairs;
\circled{2} Auto-Join~\cite{autojoin} and \circled{3} CST~\cite{cst} learn string-transformation programs, CST with a faster search;
\circled{4} DTT~\cite{dtt} transforms each value with a fine-tuned language model.
\emph{(ii) LLM semantic operators}, each given a join condition that the base LLM writes from sample rows:
\circled{5} LOTUS~\cite{lotus} calls the LLM only on pairs whose embedding similarity falls between two thresholds learned from LLM labels;
\circled{6} FDJ~\cite{fdj} prunes pairs with a rule over features that the LLM extracts from each row;
\circled{7} AISQL~\cite{aisql} classifies each row of $S$ against the join column of $T$ with one LLM call per row;
\circled{8} Adaptive-Join~\cite{adaptive-join} prompts the LLM with blocks of rows of both tables sized to the context window.
\emph{(iii) Coding agents}: \circled{9} Codex-CLI~\cite{codexcli}, \circled{10} Claude-Code~\cite{claudecode}, and \circled{11} GitHub-Copilot-CLI~\cite{copilotcli} write and run a join script in an isolated sandbox, given the two tables and $\tau$.
\emph{(iv) Oracles}, which use the ground truth to pick, among their candidate programs, the one with the highest recall at a precision of at least $\tau$:
\circled{12} \emph{best-embedding-join} over three embedding models (\texttt{bge-large-en}~\cite{xiao2024cpack}, \texttt{mxbai-large}~\cite{mxbai2024}, \texttt{arctic-large}~\cite{merrick2024arctic}) and their thresholds, and
\circled{13} \emph{best-transform-join} over the programs that the base LLM writes and those that Auto-Join learns.
\emph{(v) LLM-direct-joins}: the base LLM picks, for each row of $S$, its match or none among the \circled{14} one or \circled{15} ten rows of $T$ nearest under \texttt{bge-large-en} (\emph{embedding top-1} and \emph{top-10}), or \circled{16} reads pairs of 100-row blocks of $S$ and $T$ (\emph{block nested loop join}).
The semantic operators, the coding agents, and the LLM-direct-joins use the base LLM of \system{} and stop when $B$ is spent; the program joins and the oracles run without a budget.

\begin{table*}[t]
\footnotesize
  \caption{(Exp.~1) Gated recall $GR_\tau$ of \system{} and the baselines on four benchmarks under a \$1 budget with two base LLMs. N/A marks a coding agent that does not run on that base LLM, and timeout a method that finishes no dataset within 24 hours. In this and the following tables, the \protect\rkfirst{first}, \protect\rksecond{second}, and \protect\rkthird{third} best results per column are shaded, and the oracles are not ranked.
  }
  \label{tab:e2e-gr-nocc}
  \centering
  \setlength{\tabcolsep}{4pt}
\begin{tabular}{ll cccc c cccc}
    \toprule
     & & \multicolumn{4}{c}{\texttt{gpt-5.4}} & & \multicolumn{4}{c}{\texttt{claude-sonnet-5}} \\
    \cmidrule(lr){3-6} \cmidrule(lr){8-11}
    \multicolumn{2}{l}{Method} & \joinhard{} & AFJ & AJ & SJ & & \joinhard{} & AFJ & AJ & SJ \\
    \midrule
    \multirow{4}{*}{Program}
     & Auto-FuzzyJoin & 0.082 & 0.347 & 0.647 & 0.015 & & 0.082 & 0.347 & 0.647 & 0.015 \\
     & Auto-Join & 0.000 & 0.036 & 0.769 & 0.012 & & 0.000 & 0.036 & 0.769 & 0.012 \\
     & CST & timeout & 0.021 & 0.446 & 0.006 & & timeout & 0.021 & 0.446 & 0.006 \\
     & DTT & 0.000 & 0.000 & 0.466 & 0.000 & & 0.000 & 0.000 & 0.466 & 0.000 \\
    \midrule
    \multirow{4}{*}{\makecell[l]{LLM semantic\\operator}}
     & LOTUS & 0.010 & 0.034 & 0.678 & 0.261 &  & 0.000 & 0.046 & 0.555 & 0.243 \\
     & FDJ & 0.000 & 0.003 & 0.461 & 0.343 &  & 0.000 & 0.008 & 0.475 & 0.339 \\
     & AISQL & 0.000 & 0.249 & 0.840 & \rkthird{0.935} &  & 0.000 & 0.192 & 0.903 & 0.929 \\
     & Adaptive-Join & 0.000 & 0.048 & 0.616 & 0.437 &  & 0.000 & 0.079 & 0.267 & 0.119 \\
    \midrule
    \multirow{3}{*}{Coding agent}
     & Codex-CLI & \rksecond{0.771} & \rkthird{0.681} & \rksecond{0.992} & \rkfirst{0.956} &  & N/A & N/A & N/A & N/A \\
     & Claude-Code & N/A & N/A & N/A & N/A &  & \rksecond{0.695} & 0.568 & \rkthird{0.992} & \rkfirst{0.959} \\
     & GitHub-Copilot-CLI & \rkthird{0.698} & 0.604 & 0.923 & 0.905 &  & \rkthird{0.527} & 0.488 & \rksecond{0.995} & \rkthird{0.941} \\
    \midrule
    \multirow{2}{*}{Oracle}
     & best-embedding-join & 0.514 & 0.672 & 0.916 & 0.379 &  & 0.514 & 0.672 & 0.916 & 0.379 \\
     & best-transform-join & 0.236 & 0.173 & 0.831 & 0.052 &  & 0.270 & 0.157 & 0.869 & 0.059 \\
    \midrule
    \multirow{3}{*}{LLM-direct-join}
     & Embedding top-1 & 0.013 & 0.580 & 0.825 & 0.409 &  & 0.012 & \rkthird{0.654} & 0.887 & 0.408 \\
     & Embedding top-10 & 0.003 & \rksecond{0.707} & 0.936 & 0.686 &  & 0.003 & \rksecond{0.692} & 0.973 & 0.678 \\
     & Block nested loop join & 0.000 & 0.546 & \rkthird{0.939} & 0.900 &  & 0.000 & 0.384 & 0.893 & 0.745 \\
    \specialrule{\lightrulewidth}{\aboverulesep}{0pt}
    \rowcolor{gray!15}
     & \textbf{\system{}}\sysrowstrut & \rkfirst{\textbf{0.922}} & \rkfirst{\textbf{0.785}} & \rkfirst{\textbf{0.996}} & \rksecond{\textbf{0.937}} & & \rkfirst{\textbf{0.949}} & \rkfirst{\textbf{0.753}} & \rkfirst{\textbf{0.996}} & \rksecond{\textbf{0.951}} \\
    \specialrule{\heavyrulewidth}{0pt}{\belowbottomsep}
  \end{tabular}
\end{table*}

\begin{table*}[t]
\footnotesize
  \caption{(Exp.~2) Gated recall $GR_\tau$ of \system{} and the LLM-based baselines on \joinhard{} and AFJ with six base LLMs under a \$1 budget. N/A marks a coding agent that does not run on that base LLM.}
  \label{tab:varying_llms}
  \centering
  \setlength{\tabcolsep}{3pt}
\resizebox{\textwidth}{!}{%
\begin{tabular}{llcccccccccccc}
    \toprule
     & & \multicolumn{6}{c}{\joinhard{}} & \multicolumn{6}{c}{AFJ} \\
    \cmidrule(lr){3-8} \cmidrule(lr){9-14}
    \multicolumn{2}{l}{Method} & \texttt{gpt-5.4} & \makecell{\texttt{claude-}\\\texttt{sonnet-5}} & \makecell{\texttt{gpt-5.4-}\\\texttt{mini}} & \makecell{\texttt{DeepSeek-}\\\texttt{V4-Pro}} & \makecell{\texttt{gpt-oss-}\\\texttt{120b}} & \makecell{\texttt{Mistral-}\\\texttt{Large-3}} & \texttt{gpt-5.4} & \makecell{\texttt{claude-}\\\texttt{sonnet-5}} & \makecell{\texttt{gpt-5.4-}\\\texttt{mini}} & \makecell{\texttt{DeepSeek-}\\\texttt{V4-Pro}} & \makecell{\texttt{gpt-oss-}\\\texttt{120b}} & \makecell{\texttt{Mistral-}\\\texttt{Large-3}} \\
    \midrule
    \multirow{4}{*}{\makecell[l]{LLM semantic\\operator}}
         & LOTUS & 0.010 & 0.000 & 0.000 & 0.000 & 0.000 & 0.000 & 0.034 & 0.046 & 0.132 & 0.058 & 0.027 & 0.000 \\
         & FDJ & 0.000 & 0.000 & 0.000 & 0.000 & 0.000 & 0.000 & 0.003 & 0.008 & 0.018 & 0.009 & 0.005 & 0.024 \\
           & AISQL & 0.000 & 0.000 & 0.000 & 0.000 & 0.000 & 0.000 & 0.249 & 0.192 & 0.213 & 0.349 & 0.227 & 0.131 \\
         & Adaptive-Join & 0.000 & 0.000 & 0.000 & 0.000 & 0.000 & 0.000 & 0.048 & 0.079 & 0.000 & 0.020 & 0.271 & 0.016 \\
    \midrule
    \multirow{3}{*}{Coding agent}
        & Codex-CLI & \rksecond{0.771} & N/A & \rksecond{0.629} & N/A & N/A & N/A & \rkthird{0.681} & N/A & 0.543 & N/A & N/A & N/A \\
        & Claude-Code & N/A & \rksecond{0.695} & N/A & N/A & N/A & N/A & N/A & 0.568 & N/A & N/A & N/A & N/A \\
        & GitHub-Copilot-CLI & \rkthird{0.698} & \rkthird{0.527} & \rkthird{0.280} & N/A & N/A & N/A & 0.604 & 0.488 & 0.238 & N/A & N/A & N/A \\
    \midrule
    \multirow{3}{*}{LLM-direct-join}
             & Embedding top-1 & 0.013 & 0.012 & 0.002 & \rksecond{0.013} & \rksecond{0.079} & \rksecond{0.005} & 0.580 & \rkthird{0.654} & \rksecond{0.611} & \rkthird{0.651} & \rkthird{0.684} & \rkthird{0.594} \\
            & Embedding top-10 & 0.003 & 0.003 & 0.001 & \rkthird{0.003} & \rkthird{0.021} & \rkthird{0.002} & \rksecond{0.707} & \rksecond{0.692} & \rkthird{0.609} & \rksecond{0.723} & \rksecond{0.712} & \rksecond{0.664} \\
          & Block nested loop join & 0.000 & 0.000 & 0.000 & 0.000 & 0.001 & 0.000 & 0.546 & 0.384 & 0.025 & 0.389 & 0.625 & 0.060 \\
    \specialrule{\lightrulewidth}{\aboverulesep}{0pt}
    \rowcolor{gray!15}
        & \textbf{\system{}}\sysrowstrut & \rkfirst{\textbf{0.922}} & \rkfirst{\textbf{0.949}} & \rkfirst{\textbf{0.850}} & \rkfirst{\textbf{0.724}} & \rkfirst{\textbf{0.874}} & \rkfirst{\textbf{0.670}} & \rkfirst{\textbf{0.785}} & \rkfirst{\textbf{0.753}} & \rkfirst{\textbf{0.743}} & \rkfirst{\textbf{0.795}} & \rkfirst{\textbf{0.784}} & \rkfirst{\textbf{0.699}} \\
    \specialrule{\heavyrulewidth}{0pt}{\belowbottomsep}
  \end{tabular}}
\end{table*}

\stitle{Evaluation metrics.}
The headline metric is the \emph{gated recall} $GR_{\tau} = R \cdot \mathbf{1}[P \geq \tau]$, the recall $R$ of a result counted only when its precision $P$ reaches the target $\tau$.
$P$ and $R$ are computed against the ground truth, and a dataset on which a method commits no pair counts as $P = R = F_1 = 0$.
Some experiments also report $P$, $R$, $F_1$, the API cost, billed at the providers' per-token prices with cached input tokens at their reduced prices, and the runtime.
Each value is the mean over the datasets of a benchmark unless stated otherwise.
A baseline that exceeds 24 hours on a dataset is averaged over the datasets that it finishes, and \emph{timeout} marks one that finishes none.

\stitle{Implementation details.}
\system{} is implemented in Python 3.11 and runs on a server with two Intel Xeon 6767P CPUs (128 cores, 256 threads), 1~TB of memory, and two NVIDIA H200 GPUs, which compute the embeddings.
The baselines use the authors' code where it is public (the \texttt{autofj} package of Auto-FuzzyJoin~\cite{autofuzzyjoin}, the binary release of Auto-Join~\cite{autojoin}, and the repositories of CST~\cite{cst}, DTT~\cite{dtt}, LOTUS~\cite{lotus}, FDJ~\cite{fdj}, and Adaptive-Join~\cite{adaptive-join}, with a budget stop added to the semantic operators), and AISQL~\cite{aisql} is re-implemented from its paper.
Codex-CLI runs with \texttt{gpt-5.4} and \texttt{gpt-5.4-mini}, Claude-Code with \texttt{claude-sonnet-5}, and GitHub-Copilot-CLI with all three, none with access to the ground truth.
The prompts and configurations are in the open-source repository.

\stitle{Default join types in \system{}.}
The default pool has 17 \textsf{CREATE} arms (\S\ref{sec:sol:create}).
\circled{1} \emph{J1: fuzzy-join} (two arms): an LLM-create arm whose pull creates one instance per embedding model (\texttt{bge-large-en}, \texttt{mxbai-large}, \texttt{arctic-large}) with a threshold set from 100 pairs that the base LLM judges, and a non-LLM-create arm, Auto-FuzzyJoin with a precision target of 0.95.
\circled{2} \emph{J2: transformation-join} (two arms): an LLM-create arm whose instances are programs that the base LLM writes from sample rows, and a non-LLM-create arm, Auto-Join.
\circled{3} \emph{J3: semantic-join} (one arm): an LLM-create arm whose instances are value mappings that the base LLM writes from sample rows.
\circled{4} \emph{Composition}: the LLM-create arms of J2 and J3 also follow each instance with a J1 fuzzy-join over \texttt{bge-large-en}, creating J2 + J1 and J3 + J1 instances.
\circled{5} \emph{J4: LLM-direct-join} (12 arms): the three rules of the LLM-direct-join baselines, each with \texttt{gpt-5.4}, \texttt{gpt-5.4-mini}, \texttt{claude-sonnet-5}, and \texttt{claude-haiku-4-5}; each instance executes on blocks of 100 rows of $S$ not yet joined (\S\ref{sec:sol:instance}).
Auto-FuzzyJoin and Auto-Join stop after 300 seconds and are left out on \joinhard{}, where they take hours.

\stitle{Parameter settings.}
The precision target is $\tau = 0.90$ at confidence level $1-\alpha = 0.90$, and the budget is $B = \$1$ per dataset.
The base LLM writes the programs and the mappings, sets the fuzzy thresholds, and serves as the LLM-oracle, which judges samples of at most $n = 100$ pairs (\S\ref{sec:sol:precision}).
The base LLM is \texttt{gpt-5.4} unless stated otherwise, and \textsf{CREATE} arms predict their gains from their latest $K = 2$ pulls (\S\ref{sec:sol:create}).
All datasets use the same configuration.

\subsection{Experimental Results}
\label{sec:exp-results}

\newcounter{exp}
\setcounter{exp}{1}

\stitle{Experiment \theexp: End-to-end comparison.}
Table~\ref{tab:e2e-gr-nocc} compares \system{} and the baselines on the four benchmarks under a \$1 budget, with \texttt{gpt-5.4} and \texttt{claude-sonnet-5} as the base LLM.

\textbf{Observation \theexp. \system{} achieves the highest gated recall on three of the four benchmarks, with the largest lead on \joinhard{}, where the semantic operators and the LLM-direct-joins recover almost no pairs within \$1.}
\system{} is first on \joinhard{}, AFJ, and AJ with both base LLMs and second on SJ, where it trails the coding agents by at most 0.019. Its lead is largest on \joinhard{} (0.922 and 0.949, against at most 0.771), where the semantic operators and the LLM-direct-joins reach at most 0.013 within \$1 and the best single program selected with the ground truth reaches 0.514. Several baselines exceed 0.9 on SJ, but none reaches 0.78 on \joinhard{} or AFJ.

\stepcounter{exp}
\stitle{Experiment \theexp: Varying base LLMs.}
Table~\ref{tab:varying_llms} reports the gated recall of \system{} and the LLM-based baselines on \joinhard{} and AFJ under a \$1 budget with six base LLMs: \texttt{gpt-5.4}, \texttt{claude-sonnet-5}, \texttt{gpt-5.4-mini}, and the open-weight \texttt{DeepSeek-V4-Pro}, \texttt{gpt-oss-120b}, and \texttt{Mistral-Large-3}.

\textbf{Observation \theexp. The advantage of \system{} holds across base LLMs, including smaller and open-weight ones.}
\system{} has the highest gated recall with every base LLM on both benchmarks: 0.670 to 0.949 on \joinhard{} against at most 0.771, and 0.699 to 0.795 on AFJ against at most 0.723. With the open-weight LLMs, where no coding agent runs, every baseline stays below 0.08 on \joinhard{}.

\stepcounter{exp}
\stitle{Experiment \theexp: Cost and runtime.}
Figure~\ref{fig:cost_runtime} reports the API cost and runtime of \system{} per dataset, split by \textsf{CREATE} and \textsf{ESTIMATE-EXECUTE}, on all benchmarks with both base LLMs.

\textbf{Observation \theexp. Most of the cost goes to creating instances, and the checks behind the precision guarantee add little.}
\system{} spends close to \$1 per dataset on \joinhard{}, about \$0.70 on AFJ, and less than \$0.30 on AJ and SJ, where it stops once no arm is expected to add correct pairs. \textsf{CREATE} takes 45\% to 75\% of the cost and 51\% to 90\% of the runtime. \textsf{ESTIMATE-EXECUTE}, which includes the checks behind the precision guarantee, costs at most \$0.33 per dataset, because each sample has at most 100 pairs.

\begin{figure}[t]
\centering
\includegraphics[width=\columnwidth]{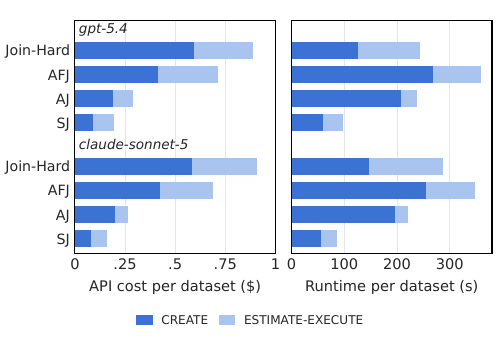}
\caption{(Exp.~3) API cost and runtime of \system{} per dataset by action on four benchmarks with two base LLMs.
}
\label{fig:cost_runtime}
\end{figure}

\stepcounter{exp}
\stitle{Experiment \theexp: Varying budget.}
Table~\ref{tab:varying_budget} reports the gated recall of \system{} and the LLM-based baselines on \joinhard{} under budgets from \$0.5 to \$16 and with an unlimited budget.
The semantic operators and the LLM-direct-joins process rows or blocks in random order until the budget is spent.

\textbf{Observation \theexp. With \$1, \system{} exceeds what the LLM-direct-joins and the semantic operators reach with an unlimited budget.}
\system{} reaches 0.922 with \$1 and 0.922 to 0.942 with larger budgets. The LLM-direct-joins grow roughly in proportion to the budget, to 0.186 at \$16, and the semantic operators stay at or below 0.010. With an unlimited budget, these baselines reach at most 0.770 and 0.771, at estimated costs of \$588 and \$1.0 million per dataset, while the coding agents spend less than \$1 and vary without a trend.
With an unlimited budget, it spends \$4.44 per dataset on average, so a larger budget changes its gated recall little.

\begin{table}[t]
\small
  \caption{(Exp.~4) Gated recall $GR_\tau$ of \system{} and the LLM-based baselines on \joinhard{} under different budgets, and API cost with an unlimited budget. $\dagger$ marks estimates from sampled source rows, and parentheses give the semantic operators' cost lower bounds.}
  \label{tab:varying_budget}
  \centering
  \setlength{\tabcolsep}{3.5pt}
  \resizebox{\columnwidth}{!}{%
  \begin{tabular}{llcccccccr}
    \toprule
    \multicolumn{2}{l}{Budget $B$} & \$0.5 & \$1 & \$2 & \$4 & \$8 & \$16 & \multicolumn{2}{c}{Unlimited budget} \\
    \cmidrule(lr){3-8} \cmidrule(lr){9-10}
    \multicolumn{2}{l}{Method} & $GR_\tau$ & $GR_\tau$ & $GR_\tau$ & $GR_\tau$ & $GR_\tau$ & $GR_\tau$ & $GR_\tau$ & Cost (\$) \\
    \midrule
    \multirow{4}{*}{\makecell[l]{LLM semantic\\operator}}
     & LOTUS & 0.010 & 0.010 & 0.010 & 0.010 & 0.010 & 0.010 & 0.552$^\dagger$ & 1{,}126{,}229$^\dagger$ (3{,}108) \\
     & FDJ & 0.000 & 0.000 & 0.000 & 0.000 & 0.000 & 0.000 & 0.577$^\dagger$ & 1{,}982{,}447$^\dagger$ (489) \\
     & AISQL & 0.000 & 0.000 & 0.000 & 0.000 & 0.000 & 0.000 & \rkthird{0.771$^\dagger$} & 1{,}028{,}671$^\dagger$ (301) \\
     & Adaptive-Join & 0.000 & 0.000 & 0.000 & 0.000 & 0.000 & 0.000 & 0.155$^\dagger$ & 396{,}434$^\dagger$ (3{,}100) \\
    \midrule
    \multirow{2}{*}{Coding agent}
     & Codex-CLI & \rksecond{0.663} & \rksecond{0.771} & \rksecond{0.756} & \rksecond{0.766} & \rksecond{0.754} & \rksecond{0.778} & 0.682 & 0.44 \\
     & GitHub-Copilot-CLI & \rkthird{0.635} & \rkthird{0.698} & \rkthird{0.655} & \rkthird{0.707} & \rkthird{0.642} & \rkthird{0.672} & \rksecond{0.797} & 0.48 \\
    \midrule
    \multirow{3}{*}{LLM-direct-join}
     & Embedding top-1 & 0.004 & 0.013 & 0.020 & 0.043 & 0.103 & 0.186 & 0.649$^\dagger$ & 108$^\dagger$ \\
     & Embedding top-10 & 0.001 & 0.003 & 0.004 & 0.007 & 0.018 & 0.037 & 0.770$^\dagger$ & 588$^\dagger$ \\
     & Block nested loop join & 0.000 & 0.000 & 0.000 & 0.001 & 0.002 & 0.003 & 0.355$^\dagger$ & 97{,}310$^\dagger$ \\
    \specialrule{\lightrulewidth}{\aboverulesep}{0pt}
    \rowcolor{gray!15}
     & \textbf{\system{}}\sysrowstrut & \rkfirst{\textbf{0.862}} & \rkfirst{\textbf{0.922}} & \rkfirst{\textbf{0.922}} & \rkfirst{\textbf{0.923}} & \rkfirst{\textbf{0.923}} & \rkfirst{\textbf{0.924}} & \rkfirst{\textbf{0.942}} & \textbf{4.44} \\
    \specialrule{\heavyrulewidth}{0pt}{\belowbottomsep}
  \end{tabular}}
\end{table}

\stepcounter{exp}
\stitle{Experiment \theexp: Varying precision target.}
Figure~\ref{fig:tau_precision} shows, as box plots over the datasets of \joinhard{} and AFJ, the precision of \system{} under precision targets $\tau$ from 0.80 to 0.98 with $B = \$1$.

\textbf{Observation \theexp. The precision guard meets every target up to 0.90 on every dataset; above that, the precision is bounded by the precision of the LLM-oracle.}
For $\tau$ from 0.80 to 0.90, every result meets $\tau$ on both benchmarks, and the precision rises with $\tau$ (on \joinhard{}, the median grows from 0.955 to 0.979), because a higher target admits fewer batches of lower precision. At $\tau = 0.95$ and $0.98$, the mean precision meets $\tau$ except on AFJ at $\tau = 0.98$ (0.972), but some datasets fall below $\tau$: 12 and 27 of the 50 AFJ datasets, and 1 of the 10 \joinhard{} datasets at $\tau = 0.98$. Proposition~\ref{prop:gate} assumes that the verdicts of the LLM-oracle are correct, and these targets approach or exceed the precision of the LLM-oracle (0.962 on AFJ, Table~\ref{tab:oracle_precision}).

\begin{figure}[t]
\centering
\includegraphics[width=\columnwidth]{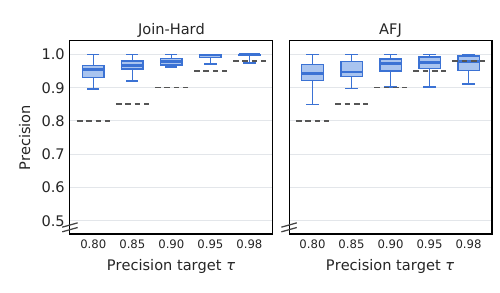}
\caption{(Exp.~5) Precision of \system{} on \joinhard{} and AFJ under five precision targets $\tau$. Boxes span the 25th to 75th percentiles, whiskers the minimum and maximum, and dashed lines mark $\tau$. }
\label{fig:tau_precision}
\end{figure}

\stepcounter{exp}
\stitle{Experiment \theexp: Ablation of join types.}
Table~\ref{tab:ablation_arms} reports the gated recall, precision, recall, and $F_1$ of \system{} with \texttt{gpt-5.4} after removing join types from the pool, averaged over the 141 datasets of the four benchmarks.
\emph{w/o Fuzzy} removes both J1 arms but keeps the fuzzy step of the compositions; \emph{w/o Transform} and \emph{w/o Semantic} remove the J2 or the J3 arms with their compositions; and \emph{w/o LLM-direct-join} removes the 12 J4 arms.

\textbf{Observation \theexp. Every join type contributes: removing any one lowers the gated recall, and no type is fully replaced by the others.}
Removing any one type lowers the gated recall from 0.895 to at most 0.872, while the precision stays at least 0.968.

\begin{table}[t]
\small
  \caption{(Exp.~6) Gated recall, precision, recall, and $F_1$ of \system{} without one join type, averaged over 141 datasets; bold and underline mark the best and second-best per column.}
  \label{tab:ablation_arms}
  \centering
  \setlength{\tabcolsep}{3.5pt}
\begin{tabular}{lcccc}
    \toprule
    Variant & $GR_\tau$ & $P$ & $R$ & $F_1$ \\
    \midrule
    w/o Fuzzy & 0.861 & 0.968 & 0.874 & 0.906 \\
    w/o Transform & 0.847 & 0.972 & 0.861 & 0.890 \\
    w/o Semantic & \underline{0.872} & 0.971 & \underline{0.891} & \underline{0.920} \\
    w/o LLM-direct-join & 0.850 & \textbf{0.975} & 0.855 & 0.897 \\
    \specialrule{\lightrulewidth}{\aboverulesep}{0pt}
    \rowcolor{gray!15}
    \textbf{\system{}}\sysrowstrut & \textbf{0.895} & \underline{0.974} & \textbf{0.895} & \textbf{0.925} \\
    \specialrule{\heavyrulewidth}{0pt}{\belowbottomsep}
  \end{tabular}
\end{table}

\section{Related Work}
\label{sec:related}

\stitle{Program-based joins.}
Automatic join methods handle keys that differ, each for one join type: fuzzy-joins use set-similarity algorithms~\cite{ssjoin, ppjoin, jiang2014strjoin} or label-free matchers~\cite{autofuzzyjoin, zeroem}, transformation-joins search for programs of string operators~\cite{autojoin, cst, efficient-auto-join-2, omidvartehrani2024gxjoin}, and semantic-joins mine mappings from table corpora~\cite{semajoin, mapping-tables}.
Auto-FuzzyJoin~\cite{autofuzzyjoin} also maximizes recall under a precision target, but only for fuzzy-joins, and it estimates precision without labels by assuming that the reference table has no duplicates.
Programming by example synthesizes data transformations from user-provided input-output examples~\cite{flashfill, blinkfill, foofah, tde, dataxformer}, and LLMs write them as code~\cite{nobari2024tabulax} or generate the transformed values directly~\cite{dtt}.
\system{} creates instances of several join types with such methods or an LLM and guarantees the precision of their combined result.

\stitle{LLM-based joins.}
Semantic operators in LLM-powered query processing include joins whose condition is stated in natural language~\cite{lotus, thalamusdb, palimpzest, docetl, abacus}, and LLM-based entity matching decides whether two records match~\cite{narayan2022wrangle, matchgpt, comem, batcher}.
Semantic-operator joins reduce LLM calls with embeddings, extracted features, batching, or classification~\cite{lotus, fdj, adaptive-join, aisql}, but their cost still grows with the rows that the LLM reads.
Coding agents can also write a join script from samples of the two tables, without a guarantee on the precision of its output; \S\ref{sec:exp-results} compares three of them with \system{}.
\system{} has an LLM read rows only in LLM-direct-joins (J4) on the rows not yet joined, and otherwise uses it to create instances and judge samples of their outputs.

\stitle{Budgeted optimization and multi-armed bandits.}
Multi-armed bandits balance exploration and exploitation over arms with fixed reward distributions~\cite{bubeck2012regret}; variants add budgets~\cite{tranthanh2010epsilon, tranthanh2012budget, ding2013budget, badanidiyuru2018bwk}, arms that arrive, expire, or come from a reservoir~\cite{whittle1981arm, chakrabarti2008mortal, li2017infinite, kalvit2021arm}, decaying rewards~\cite{levine2017rotting}, or paid observations~\cite{seldin2014paid}.
A cost-aware greedy rule approximates budgeted maximum coverage~\cite{khuller1999budgeted}, also in adaptive settings~\cite{golovin2011adaptive, gabillon2013adaptive}.
\system{} borrows exploration, exploitation, and the gain-per-dollar rule, but its arms have two levels, and the rewards of its instances overlap, diminish as $J$ grows, and are observed only through paid LLM-oracle samples.
Because its committed outputs must also jointly meet a precision target, the guarantees of these works do not carry over.

\section{Conclusion}
\label{sec:conclusion}

This paper studied Precision-Guaranteed Budget-Constrained Join, which joins two tables with several join types under a dollar budget while keeping the precision of the result at or above a target.
Our key idea is to use LLMs selectively to create join programs and verify their outputs through sampling, while adaptively deciding which joins to pursue under budget and precision constraints.
Under the same budget, \system{} achieves the highest gated recall in nearly all settings on four benchmarks, including \joinhard{}.
Future work includes relaxing the assumption that the verdicts of the LLM-oracle are correct and adding join types to the pool.

\clearpage
\balance
\bibliographystyle{ACM-Reference-Format}
\bibliography{ref}

\clearpage
\nobalance
\appendix
\section{Proof of Theorem 1}

Join instance selection (\S3.2) takes a finite set of instances $\mathcal{X}$, the join result $J_x$ and cost $r_x$ of each $x \in \mathcal{X}$, the correct pairs $J^{*}$, a precision target $\tau$, and a budget $B$, and selects $X \subseteq \mathcal{X}$ that solves
\[
\max_{X \subseteq \mathcal{X}}\ \text{recall}(J_X) \quad \text{s.t.} \quad \text{precision}(J_X) \geq \tau, \quad \sum_{x \in X} r_x \leq B.
\]
Theorem~1 states that join instance selection is \textsf{NP}-hard.

\begin{proof}
The reduction is from Budgeted Maximum Coverage (BMC)~\cite{khuller1999budgeted}, which is \textsf{NP}-hard. An instance of BMC consists of a universe $E = \{e_1, \dots, e_n\}$, subsets $F_1, \dots, F_K \subseteq E$ with costs $w_1, \dots, w_K$, and a budget $W$. The task is to select subsets of total cost at most $W$ that together cover the most elements of $E$.

Given such an instance, construct an instance of join instance selection in polynomial time. Let $S = \{s_1, \dots, s_n\}$, $T = \{t_1, \dots, t_n\}$, and $J^{*} = \{(s_j, t_j) : 1 \leq j \leq n\}$. For each subset $F_i$, let $\mathcal{X}$ contain an instance $x_i$ with $J_{x_i} = \{(s_j, t_j) : e_j \in F_i\}$ and $r_{x_i} = w_i$, and let $B = W$. Every pair of every instance is in $J^{*}$, so $\text{precision}(J_X) = 1 \geq \tau$ for every nonempty $X \subseteq \mathcal{X}$. For $X = \{x_i : i \in I\}$,
\[
\text{recall}(J_X) = \frac{1}{n}\Bigl|\bigcup_{i \in I} F_i\Bigr|,
\qquad
\textstyle\sum_{x \in X} r_x = \sum_{i \in I} w_i,
\]
so $X$ is an optimal solution of join instance selection exactly when $\{F_i : i \in I\}$ is an optimal solution of BMC. A polynomial-time algorithm for join instance selection would therefore solve BMC, which is \textsf{NP}-hard.
\end{proof}

\section{Proof of Proposition 1}

\stitle{Precision guard.}
The proof uses the following properties of the precision guard (\S4.3 and \S4.5).
The $k$-th batch of a run, $Q_k$, has $N_k$ pairs, of which $C_k$ are correct.
It contains only pairs whose rows of $S$ are not yet joined when it is first sampled, it does not change while it is checked, and no other batch is committed between its checks.
Its first check judges $\min(n, N_k)$ pairs of $Q_k$, drawn uniformly without replacement, and each of its at most two further checks judges $\min(n, N_k - m)$ more pairs, drawn uniformly without replacement from the pairs not yet judged, where $m$ is the number of pairs judged so far.
A check whose LLM-oracle response the cap on output tokens cuts short counts as not performed (\S4.5), so every check that the run performs judges all the pairs drawn for it.
If $c$ of the $m$ judged pairs are judged correct, the check uses the bound $C^{-}_k = c$ if $m = N_k$, and otherwise
\[
\begin{aligned}
C^{-}_k &= \min\{C' \in \{0, \dots, N_k\} : \Pr[Y \geq c] > \alpha_k\},\\
Y &\sim \mathrm{Hypergeometric}(N_k, C', m),
\end{aligned}
\]
where $\alpha_k = \alpha / (3 \cdot 2^j)$ if batch $k$ is the $j$-th partially sampled batch of the run, that is, the $j$-th batch whose first check judges fewer than $N_k$ pairs; the other batches are fully judged by their first check.
The batch would add $A_k = N_k - (m - c)$ pairs, since its pairs judged incorrect are left out, and a batch with $A_k = 0$ is skipped.
The precision guard commits batch $k$ if
\[
\textstyle\sum_i C^{-}_i + C^{-}_k \geq \tau \bigl(\sum_i A_i + A_k\bigr),
\]
where $i$ ranges over the records of the batches in $J$, which is equivalent to $L \geq \tau$ in \S4.5, and it then records $(C^{-}_k, A_k)$.
If it rejects the batch after its last check, it commits the pairs of its samples judged correct, if any, with a record whose $C^{-}$ and $A$ both equal their number.
An empty join result is taken to have precision 1.

\stitle{Notation.}
By the assumption of Proposition~1, the LLM-oracle judges a pair correct if and only if the pair belongs to $J^{*}$.
Let $\mathcal{F}_k$ denote the history of the run before the first sample of batch $k$, that is, all actions, samples, and verdicts up to that point.
The batch $Q_k$, and whether it is the $j$-th partially sampled batch, are determined by $\mathcal{F}_k$.
Since its samples are drawn uniformly without replacement, conditional on $\mathcal{F}_k$, the pairs of $Q_k$ are judged in the order of a uniformly random permutation $\pi_k$ of $Q_k$, so the first $m$ pairs judged are the first $m$ entries of $\pi_k$.

\begin{lemma}\label{lem:hyper}
Let $0 \leq m \leq N$, $0 \leq C \leq N$, and $0 < a < 1$. For an integer $0 \leq c \leq m$, let $g_c(C') = \Pr[X' \geq c]$ with $X' \sim \mathrm{Hypergeometric}(N, C', m)$, and let $C^{-}(c) = \min\{C' : g_c(C') > a\}$. Then $c \leq C^{-}(c) \leq N - m + c$, with $C^{-}(c) = c$ if $m = N$, and $\Pr[C^{-}(X) > C] \leq a$ for $X \sim \mathrm{Hypergeometric}(N, C, m)$.
\end{lemma}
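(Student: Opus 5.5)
The plan rests on one structural fact: for fixed $N$, $m$ and $c$, the tail $g_c(C')=\Pr[X'\geq c]$ is nondecreasing in $C'$. With more correct pairs in the population, a sample of fixed size stochastically contains more of them. I will show this by a coupling. Draw one uniformly random $m$-subset of the $N$ pairs, and let the correct set for $C'+1$ be obtained from the correct set for $C'$ by adding one pair. The sampled count can then only increase. Monotonicity turns $\{C':g_c(C')>a\}$ into an up-set $\{C^{-}(c),\dots,N\}$, which is the key to the tail bound.

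\emph{Range and the fully judged case.} For $C'<c$ we have $X'\leq C'<c$, so $g_c(C')=0\leq a$. Hence $C^{-}(c)\geq c$. For $C'=N-m+c$, at most $N-C'=m-c$ incorrect pairs exist, so the sample has $X'\geq m-(m-c)=c$ surely. Thus $g_c(C')=1>a$, which gives $C^{-}(c)\leq N-m+c$. This value is at most $N$, so the minimum is over a nonempty set and is well defined. If $m=N$, the two bounds coincide, and $C^{-}(c)=c$.

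\emph{Tail bound.} Monotonicity of $g_c$ in $C'$ gives $C^{-}(c)>C$ iff $g_c(C)\leq a$. Also, $g_c(C)$ is nonincreasing in $c$. Let $c^{*}$ be the smallest $c\in\{0,\dots,m\}$ with $g_c(C)\leq a$; if none exists, the probability is $0$. Then $\{C^{-}(X)>C\}=\{g_X(C)\leq a\}=\{X\geq c^{*}\}$. Here I use that $g_c(C)$ is nonincreasing in $c$, so the set of $c$ with $g_c(C)\leq a$ is exactly $\{c^{*},\dots,m\}$. Therefore $\Pr[C^{-}(X)>C]=\Pr[X\geq c^{*}]=g_{c^{*}}(C)\leq a$, since $X$ has exactly the law defining $g_{c^{*}}(C)$.

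\emph{Main obstacle.} The main care point is the monotonicity coupling and getting the direction of the two monotonicities right: increasing in $C'$ and decreasing in $c$. The edge cases $m=0$ and $c=0$ also need checking: $g_0\equiv1$, so $C^{-}(0)=0\leq C$ always, consistent with the above. The rest is bookkeeping.
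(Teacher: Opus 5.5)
Your proposal is correct and follows essentially the same route as the paper: the same coupling (marking one more pair correct) for monotonicity of $g_c$ in $C'$, the same range bounds from $g_c(C')=0$ for $C'<c$ and $g_c(N-m+c)=1$, and the same threshold argument giving $\Pr[C^{-}(X)>C]=g_{c^{*}}(C)\le a$. The only cosmetic difference is that you get the $m=N$ case from the sandwich $c\le C^{-}(c)\le N-m+c=c$, whereas the paper argues it directly from $X'=C'$.
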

\begin{proof}
A population with $C'+1$ correct pairs is obtained from one with $C'$ correct pairs by marking one incorrect pair as correct, which does not decrease the number of correct pairs in any sample. Hence $g_c(C')$ is nondecreasing in $C'$. Since $g_c(C') = 0$ for $C' < c$, and $g_c(N-m+c) = 1$ because a population with $N-m+c$ correct pairs has only $m-c$ incorrect pairs, $C^{-}(c)$ exists and $c \leq C^{-}(c) \leq N-m+c$. If $m = N$, then $X' = C'$, so $g_c(C') > a$ holds exactly when $C' \geq c$, and $C^{-}(c) = c$.

By the monotonicity of $g_c$, $C^{-}(c) > C$ holds if and only if $g_c(C) \leq a$. The function $c \mapsto g_c(C) = \Pr[X \geq c]$ is nonincreasing, so the set of $c$ with $g_c(C) \leq a$ is either empty or equal to $\{c : c \geq c_0\}$ for some $c_0$. In the first case, $\Pr[C^{-}(X) > C] = 0$. In the second case, $\Pr[C^{-}(X) > C] = \Pr[X \geq c_0] = g_{c_0}(C) \leq a$.
\end{proof}

\begin{lemma}\label{lem:batch}
Let batch $k$ be the $j$-th partially sampled batch of the run. For $\ell \in \{1, 2, 3\}$, let $m_{k,\ell} = \min(\ell n, N_k)$, let $X_{k,\ell}$ be the number of pairs of $J^{*}$ among the first $m_{k,\ell}$ entries of $\pi_k$, and let $C^{-}_{k,\ell}$ be the bound $C^{-}(X_{k,\ell})$ of Lemma~\ref{lem:hyper} with $N = N_k$, $m = m_{k,\ell}$, and $a = \alpha_k$. Then the $\ell$-th check of batch $k$, if the run performs it, uses $C^{-}_{k,\ell}$, and
\[
\Pr\bigl[\, C^{-}_{k,\ell} > C_k \text{ for some } \ell \in \{1,2,3\} \bigm| \mathcal{F}_k \bigr] \leq \frac{\alpha}{2^j}.
\]
\end{lemma}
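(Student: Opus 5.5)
The proof has two parts: first identify the bound that each check actually uses, then apply Lemma~\ref{lem:hyper} conditionally on $\mathcal{F}_k$ and take a union bound over the three possible checks.

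\emph{Identifying the bounds.} Condition on $\mathcal{F}_k$. This fixes $Q_k$, $N_k$, $C_k$, the index $j$, and hence $\alpha_k = \alpha/(3\cdot 2^j)$; all of these are determined before the first sample of batch $k$. I will show by induction on $\ell$ that, after the $\ell$-th check, the number of judged pairs is $m_{k,\ell}=\min(\ell n, N_k)$. The first check judges $\min(n,N_k)$ pairs. If $\ell-1$ checks have judged $m=\min((\ell-1)n,N_k)$ pairs, the $\ell$-th judges $\min(n,N_k-m)$ more, and $m+\min(n,N_k-m)=\min(\ell n,N_k)$ in both cases $m<N_k$ and $m=N_k$.

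Each draw is uniform without replacement from the pairs not yet judged. Hence the judged pairs are exactly the first $m_{k,\ell}$ entries of the uniformly random permutation $\pi_k$. By the oracle-correctness assumption, the count $c$ of pairs judged correct equals $X_{k,\ell}$.

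If $m_{k,\ell}<N_k$, the check uses $C^-(X_{k,\ell})$ with $a=\alpha_k$ by definition. If $m_{k,\ell}=N_k$, the batch, although partially sampled at its first check, is now fully judged, and the guard uses the exact count $c$. Lemma~\ref{lem:hyper} gives $C^-(c)=c$ when $m=N$, so in this case too the bound used is $C^-_{k,\ell}$. This settles the first claim. A check that is cut short counts as not performed, so it creates no exception.

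\emph{The probability bound.} Conditional on $\mathcal{F}_k$, $\pi_k$ is a uniform permutation of a fixed set with $C_k$ correct pairs. Therefore, for each fixed $\ell$, $X_{k,\ell}\sim\mathrm{Hypergeometric}(N_k,C_k,m_{k,\ell})$. Lemma~\ref{lem:hyper} then gives
\[
\Pr[C^-_{k,\ell}>C_k\mid\mathcal{F}_k]\le\alpha_k .
\]
The three random variables $C^-_{k,\ell}$ are defined from $\pi_k$ whether or not the run performs the corresponding checks. A union bound over $\ell\in\{1,2,3\}$ therefore yields
\[
\Pr\bigl[\exists\,\ell:\ C^-_{k,\ell}>C_k\bigm|\mathcal{F}_k\bigr]\le 3\alpha_k=\alpha/2^j .
\]

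\emph{Main obstacle.} The delicate step is justifying that conditioning on $\mathcal{F}_k$ leaves $\pi_k$ uniform and independent of the choices that decide whether later checks happen. Those choices (the value of $L$ and the remaining budget) depend on earlier prefixes of $\pi_k$. They do not alter the distribution of the hypothetical prefixes, however, because the $C^-_{k,\ell}$ are defined for all $\ell$ regardless of which checks are performed. Stating the events in terms of these hypothetical prefixes avoids any optional-stopping issue.
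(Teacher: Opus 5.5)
Your proof is correct and follows the paper's argument essentially step for step. You identify the bound used by the $\ell$-th check as $C^{-}_{k,\ell}$, using $C^{-}(c)=c$ from Lemma~\ref{lem:hyper} once the batch is fully judged, then apply Lemma~\ref{lem:hyper} to each hypergeometric prefix count conditional on $\mathcal{F}_k$, and take a union bound over the three variables, which are defined whether or not each check is performed. Your explicit induction showing that $\ell$ checks judge exactly $\min(\ell n, N_k)$ pairs fills in a step the paper only asserts.
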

\begin{proof}
Before its $\ell$-th check, the run has judged $m_{k,\ell}$ pairs of batch $k$, which are the first $m_{k,\ell}$ entries of $\pi_k$, and $X_{k,\ell}$ of them are judged correct. If $m_{k,\ell} < N_k$, the check uses $C^{-}_{k,\ell}$ by definition. If $m_{k,\ell} = N_k$, the check uses $X_{k,\ell}$, which equals $C^{-}_{k,\ell}$ by Lemma~\ref{lem:hyper}.

Conditional on $\mathcal{F}_k$, the first $m_{k,\ell}$ entries of $\pi_k$ form a uniform sample without replacement from $Q_k$, so $X_{k,\ell} \sim \mathrm{Hypergeometric}(N_k, C_k, m_{k,\ell})$. Lemma~\ref{lem:hyper} then gives $\Pr[C^{-}_{k,\ell} > C_k \mid \mathcal{F}_k] \leq \alpha_k$ for each $\ell$. The variables $C^{-}_{k,\ell}$ are defined for all three values of $\ell$, whether or not the run performs the $\ell$-th check, so the decision to draw more pairs does not affect this bound. The union bound over $\ell$ gives $3 \alpha_k = \alpha / 2^j$.
\end{proof}

\begin{lemma}\label{lem:all}
Let $E$ be the event that $C^{-}_{k,\ell} \leq C_k$ for every partially sampled batch $k$ of the run and every $\ell \in \{1,2,3\}$. Then $\Pr[E] \geq 1 - \alpha$.
\end{lemma}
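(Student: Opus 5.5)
The plan is to combine the per-batch bounds of Lemma~\ref{lem:batch} over all partially sampled batches by a union bound. The subtlety is that the set of partially sampled batches is random and depends on the run. So I will index the bad events by the position $j$ of a batch among partially sampled batches, not by a fixed list of batches. For each $j \geq 1$, let $B_j$ be the event that the run has a $j$-th partially sampled batch, say batch $k$, and that $C^{-}_{k,\ell} > C_k$ for some $\ell \in \{1,2,3\}$. The complement of $E$ is then contained in $\bigcup_{j \geq 1} B_j$. It therefore suffices to show $\Pr[B_j] \leq \alpha/2^j$ for every $j$, since $\sum_{j\geq 1}\alpha/2^j = \alpha$.

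To bound $\Pr[B_j]$, I will condition on the history just before the first sample of the $j$-th partially sampled batch. Formally, let $k$ range over batch indices. Then $B_j$ is the disjoint union over $k$ of the events
\[
D_{j,k} = \{\text{batch } k \text{ exists and is the } j\text{-th partially sampled batch}\} \cap \{C^{-}_{k,\ell} > C_k \text{ for some } \ell\}.
\]
Disjointness holds because at most one batch can be the $j$-th partially sampled one. The first event is determined by $\mathcal{F}_k$. On it, Lemma~\ref{lem:batch} gives conditional probability at most $\alpha/2^j$ for the second event. Taking expectations,
\[
\Pr[D_{j,k}] \leq \frac{\alpha}{2^j}\,\Pr[\text{batch } k \text{ is the } j\text{-th partially sampled batch}].
\]
Summing over $k$, these probabilities add to at most $1$, so $\Pr[B_j] \leq \alpha/2^j$.

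The main obstacle is keeping this conditioning rigorous. The number of batches is finite but random, since the run stops when the budget is spent. I will handle this by noting that each LLM call costs a positive amount bounded below, or else by summing over the countably many batch indices, which countable additivity covers. Two further points need to hold. First, $\pi_k$ must be drawn independently of $\mathcal{F}_k$, which follows from sampling uniformly without replacement after $\mathcal{F}_k$ is fixed. Second, the bounds $C^{-}_{k,\ell}$ must be defined for all $\ell$ whether or not the run performs the $\ell$-th check, which Lemma~\ref{lem:batch} already ensures. Once these are in place, $\Pr[\overline{E}] \leq \sum_j \Pr[B_j] \leq \alpha$, which is the claim.
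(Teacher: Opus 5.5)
Your proof is correct and follows the paper's argument. You index the bad events by the position $j$ among partially sampled batches, condition on the history before that batch's first sample to apply Lemma~\ref{lem:batch}, and sum $\alpha/2^j$ over $j$. Your disjoint decomposition of $B_j$ into the events $D_{j,k}$ over batch indices $k$ writes out explicitly the conditioning step that the paper states in one line.
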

\begin{proof}
For $j \geq 1$, let $D_j$ be the event that the run has a $j$-th partially sampled batch $k$ and that $C^{-}_{k,\ell} > C_k$ for some $\ell$. Whether this batch exists, and which batch it is, is determined by the history before its first sample. Conditioning on this history, Lemma~\ref{lem:batch} gives $\Pr[D_j] \leq \alpha / 2^j$. The event $E$ fails exactly when some $D_j$ occurs, so $\Pr[\neg E] \leq \sum_{j \geq 1} \alpha / 2^j = \alpha$.
\end{proof}

\begin{lemma}\label{lem:records}
For each record $i$ at the end of the run, let $P_i$ be the set of pairs that the record adds to $J$. The sets $P_i$ are pairwise disjoint, their union is $J$, and $|P_i| = A_i$. On $E$, $|P_i \cap J^{*}| \geq C^{-}_i$ for every record $i$.
\end{lemma}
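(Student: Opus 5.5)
We prove the statement record by record, separating the structural claims (disjointness, union, sizes), which hold on every run, from the bound $|P_i \cap J^{*}| \geq C^{-}_i$, which uses the event $E$.

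\textbf{Records are disjoint and cover $J$.} First fix the definition: for the record of a committed batch $k$, $P_i$ is the set of pairs of $Q_k$ that were not judged incorrect. For the record of a rejected batch, $P_i$ is the set of its sampled pairs judged correct. Every pair enters $J$ only through a record, so the union of the $P_i$ is $J$.

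For disjointness, use the property recalled above. A batch contains only pairs whose rows of $S$ are not yet joined when it is first sampled, and no other batch is committed between its checks. So when a record is committed, none of its $S$-rows appears in any earlier record. Records from different batches are therefore disjoint, and indeed share no row of $S$. A single batch yields at most one record, so disjointness follows.

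\textbf{Sizes.} For a committed batch, the judged-incorrect pairs are exactly $m-c$ distinct pairs of $Q_k$, so $|P_i| = N_k - (m-c) = A_k$. For a rejected batch, $|P_i|$ is the number of sampled pairs judged correct, which equals its $A$ by definition.

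\textbf{The lower bound on $E$.} By the assumption of Proposition~1, a pair is judged correct if and only if it lies in $J^{*}$. There are three cases.
\begin{itemize}
\item A rejected-batch record consists entirely of pairs judged correct, so $|P_i \cap J^{*}| = |P_i| = C^{-}_i$.
\item A committed fully judged batch has $P_i$ equal to the $c_k$ pairs judged correct, so $|P_i\cap J^{*}| = c_k = C^{-}_k$.
\item A committed partially sampled batch: its recorded bound is the one used at its last performed check, which by Lemma~\ref{lem:batch} equals $C^{-}_{k,\ell}$ for some $\ell$. On $E$, this gives $C^{-}_k \leq C_k$. The left-out pairs were judged incorrect, hence lie outside $J^{*}$. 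Therefore $|P_i \cap J^{*}| = |Q_k \cap J^{*}| = C_k \geq C^{-}_k$.
\end{itemize}

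The main point needing care is identifying the recorded $C^{-}_k$ with some $C^{-}_{k,\ell}$. Lemma~\ref{lem:batch} provides this, provided each performed check judges exactly $m_{k,\ell}$ pairs. That holds because truncated checks are treated as not performed.
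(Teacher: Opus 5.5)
Your proposal is correct and follows essentially the same route as the paper. Disjointness comes from two facts: each batch's pairs join rows of $S$ not yet joined at its first sample, and no other commit falls between its checks. The bound then uses the same three-way case split: a fully judged committed batch with $C^{-}_k = c_k$, a partially sampled committed batch handled by Lemma~\ref{lem:batch} and the event $E$, and a rejected batch whose record has $C^{-} = A$. Your explicit check of $|P_i| = A_i$ and your remark that truncated checks count as not performed only spell out points the paper treats as immediate from its description of the guard.
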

\begin{proof}
Every pair of a batch joins a row of $S$ that is not yet joined at the first sample of the batch, no other batch is committed between its checks, and committed pairs join their rows of $S$. Pairs of different records therefore join different rows of $S$, so the sets $P_i$ are pairwise disjoint. $J$, the set of committed pairs, is the union of the sets $P_i$. For the second claim, records are of three kinds.
\begin{itemize}
\item A committed batch $k$ that is fully judged by its first check adds its pairs judged correct. These pairs belong to $J^{*}$, so $|P_i \cap J^{*}| = c_k = C^{-}_i$.
\item A committed batch $k$ that is partially sampled, whose last check is the $\ell$-th (which may judge all its pairs), adds $Q_k$ without its pairs judged incorrect. These pairs do not belong to $J^{*}$, so $|P_i \cap J^{*}| = C_k$, and $C_k \geq C^{-}_{k,\ell} = C^{-}_i$ on $E$.
\item The pairs committed from a rejected batch are judged correct, so they belong to $J^{*}$, and $|P_i \cap J^{*}| = A_i = C^{-}_i$. \qedhere
\end{itemize}
\end{proof}

\begin{proof}[Proof of Proposition~1]
Let $\mathcal{C}_t$ and $\mathcal{A}_t$ be the sums of $C^{-}_i$ and of $A_i$ over the records after the $t$-th commit of the run, with $\mathcal{C}_0 = \mathcal{A}_0 = 0$. By induction on $t$, $\mathcal{C}_t \geq \tau \mathcal{A}_t$. When the precision guard commits a batch $k$, it has checked $\mathcal{C}_{t-1} + C^{-}_k \geq \tau (\mathcal{A}_{t-1} + A_k)$. When it commits $a$ pairs from a rejected batch, $\mathcal{C}_{t-1} + a \geq \tau \mathcal{A}_{t-1} + \tau a$, because $\tau \leq 1$. Let $T$ be the number of commits of the run. By Lemma~\ref{lem:records}, $|J| = \mathcal{A}_T$, and on $E$,
\[
|J \cap J^{*}| = \sum_i |P_i \cap J^{*}| \geq \mathcal{C}_T \geq \tau \mathcal{A}_T = \tau |J|.
\]
On $E$, $\text{precision}(J) \geq \tau$ therefore holds if $J$ is nonempty, and by convention if $J$ is empty. Lemma~\ref{lem:all} gives $\Pr[\text{precision}(J) \geq \tau] \geq \Pr[E] \geq 1 - \alpha$. Lemmas~\ref{lem:batch} and~\ref{lem:all} condition on the full history before the first sample of each batch, which contains every choice of the planner, so this bound holds for any rule that selects the next batch from the outcomes of earlier rounds.
\end{proof}

\end{document}